\definecolor{implicant}{RGB}{33, 166, 57}
\definecolor{implicand}{RGB}{217, 92, 20}
\newcommand{\False}{\mathsf{false}}
\newcommand{\True}{\mathsf{true}}
\newcommand{\onlyan}[1]{\ifthenelse{\boolean{anonymous}}{#1}{}}
\newcommand{\onlynonan}[1]{\ifthenelse{\boolean{anonymous}}{}{#1}}
\newcommand{\annonan}[2]{\ifthenelse{\boolean{anonymous}}{#1}{#2}}
\definecolor{keyword}{RGB}{0,0,255}
\definecolor{comment}{RGB}{0,128,0}
\definecolor{number}{RGB}{255,255,255}
\definecolor{string}{RGB}{163,21,21}
\newcommand{\eg}{e.\,g.\xspace}
\newcommand{\etal}{et~al.\xspace}
\newcommand{\bals}[1]{\ifthenelse{\boolean{authorcomments}}{{\color{blue}\textit{[#1 -als-]}}\xspace}{}}
\newcommand{\rals}[1]{\ifthenelse{\boolean{authorcomments}}{{\color{red}\textit{[#1 -als-]}}\xspace}{}}
\newcommand{\gals}[1]{\ifthenelse{\boolean{authorcomments}}{{\color[rgb]{0,0.6,0}\textit{[#1 -als-]}}\xspace}{}}
\newcommand{\bssm}[1]{\ifthenelse{\boolean{authorcomments}}{{\color{blue}\textit{[#1 -ssm-]}}\xspace}{}}
\newcommand{\rssm}[1]{\ifthenelse{\boolean{authorcomments}}{{\color{red}\textit{[#1 -ssm-]}}\xspace}{}}
\newcommand{\gssm}[1]{\ifthenelse{\boolean{authorcomments}}{{\color[rgb]{0,0.6,0}\textit{[#1 -ssm-]}}\xspace}{}}
\newcommand{\brvh}[1]{\ifthenelse{\boolean{authorcomments}}{{\color{blue}\textit{[#1 -rvh-]}}\xspace}{}}
\newcommand{\rrvh}[1]{\ifthenelse{\boolean{authorcomments}}{{\color{red}\textit{[#1 -rvh-]}}\xspace}{}}
\newcommand{\grvh}[1]{\ifthenelse{\boolean{authorcomments}}{{\color[rgb]{0,0.6,0}\textit{[#1 -rvh-]}}\xspace}{}}
\newcommand{\bmm}[1]{\ifthenelse{\boolean{authorcomments}}{{\color{blue}\textit{[#1 -mm-]}}\xspace}{}}
\newcommand{\rmm}[1]{\ifthenelse{\boolean{authorcomments}}{{\color{red}\textit{[#1 -mm-]}}\xspace}{}}
\newcommand{\gmm}[1]{\ifthenelse{\boolean{authorcomments}}{{\color[rgb]{0,0.6,0}\textit{[#1 -mm-]}}\xspace}{}}
\newcommand{\rmac}[1]{\ifthenelse{\boolean{authorcomments}}{{\color{red}\textit{[#1 -mac-]}}\xspace}{}}
\newcommand{\bmac}[1]{\ifthenelse{\boolean{authorcomments}}{{\color{blue}\textit{[#1 -mac-]}}\xspace}{}}
\newcommand{\mmac}[1]{\ifthenelse{\boolean{authorcomments}}{{\color[rgb]{0.55, 0.0, 0.55}\textit{[#1 -mac-]}}\xspace}{}}
\newcommand{\bjep}[1]{\ifthenelse{\boolean{authorcomments}}{{\color{blue}\textit{[#1 -jep-]}}\xspace}{}}
\newcommand{\rjep}[1]{\ifthenelse{\boolean{authorcomments}}{{\color{red}\textit{[#1 -jep-]}}\xspace}{}}
\newcommand{\gjep}[1]{\ifthenelse{\boolean{authorcomments}}{{\color[rgb]{0,0.6,0}\textit{[#1 -jep-]}}\xspace}{}}
\newcommand{\lvsv}[2]{\ifthenelse{\boolean{long}}{#1}{#2}}
\lstdefinelanguage{stpa}{
	keywords={
		Losses, Hazards, SystemConstraints, ControlStructure, Responsibilities,
		UCAs, ContextTable, DCAs, ControllerConstraints, LossScenarios, SafetyRequirements,
		hierarchyLevel, processModel, controlActions, feedback,
		notProviding, providing, tooEarly,/,Late, stoppedTooSoon,
		for, controlAction, type, provided, contexts
	},
	frame=tb,
	showstringspaces=false,
	columns=flexible,
	basicstyle={\small\ttfamily},
	keywordstyle=\color{blue}\ttfamily\bfseries,  
	commentstyle=\color{comment},
	stringstyle=\color{string},
	ndkeywordstyle=\color{keyword}\bfseries,
	breaklines=true,
	breakatwhitespace=true,
	tabsize=4,
	captionpos=b,
	numbers=left,
	stepnumber=1,
	numberstyle=\tiny,
	numbersep=5pt,
	morestring=[b]",
	comment=[l]{//},
	sensitive
}
\lstdefinelanguage{SCL}
{language=C,
	morekeywords={bool, end, false, fork, fork1, input, join, join1, module,
		output, pause, par, then, tickstart, tickreturn, true},
}
\lstdefinelanguage{sctx}{
	language=C,
	morekeywords={scchart, region, state, initial, period, go, to, input, output, during, class, print, host, entry, ref},
	frame=tb,
	captionpos=b,
	breaklines=true,
	tabsize=2,
	breakindent=2mm,
	showstringspaces=false,
	stringstyle=\color{string},
	keywordstyle=\color{blue}\ttfamily\bfseries, 
	numbers=left,
	stepnumber=1,
	numberstyle=\tiny,
	numbersep=5pt,
	moredelim=**[is][\color{implicant}]{`}{`},
	moredelim=**[is][\color{implicand}]{~}{~}
}
\lstdefinelanguage{XTEND}
{language=C,
	morekeywords={def, if, null, void, for, val, var, nullOrEmpty, immutableCopy, immutableCopy,     setTypeConnector, id, setImmediate, addEffect, setTypeNormal},
	deletekeywords={const},
	moredelim=**[is][\color{red}]{@}{@},
	breaklines=true,  
	tabsize=2,
	breakindent=2mm,
	breakatwhitespace=true,
	basicstyle=\sffamily,
	columns=fullflexible,
	backgroundcolor=\color{gray!10}
}
\lstdefinelanguage{KICO}
{language=C,
	morekeywords={def, if, null, void, for, val, var, nullOrEmpty, immutableCopy, immutableCopy,     setTypeConnector, id, setImmediate, addEffect, setTypeNormal, system, pre, post, set, process, label, public},
	deletekeywords={const},
	moredelim=**[is][\color{red}]{@}{@},
	breaklines=true,  
	tabsize=2,
	breakindent=2mm,
	breakatwhitespace=true,
	basicstyle=\sffamily\scriptsize,
	columns=fullflexible,
	backgroundcolor=\color{gray!10}
}
\ttfamily\color{darkgray},
\bfseries\color{black},
\lstdefinelanguage{SCT}
{language=C,
	morekeywords={def, if, null, void, for, val, var, nullOrEmpty, immutableCopy, immutableCopy,     setTypeConnector, id, setImmediate, addEffect, setTypeNormal, system, pre, post, set, process, label, public, start, config, scchart, output, input, region, initial, state, go, to, connector, immediate, schedule},  
	deletekeywords={const},
	basicstyle=\footnotesize\sffamily,  
	moredelim=**[is][\color{red}]{$}{$},
	moredelim=**[is][\color{red}]{@}{@},  
	escapeinside={(*}{*)},
	breaklines=true,  
	tabsize=2,
	breakindent=2mm,
	breakatwhitespace=true,
	columns=fullflexible,
	backgroundcolor=\color{gray!10}
}
\lstdefinestyle{numbers}{
	numbers=left,
	stepnumber=1,
	numberstyle=\tiny,
	numbersep=3pt,
	xleftmargin=10pt,
}
\lstdefinestyle{nonumbers}{
	numbers=none,
	xleftmargin=3pt,
	xrightmargin=3pt,
}
\acrodef{RiCharts}{for Rich Charts}
\DeclareMathOperator*{\G}{G}
\DeclareMathOperator*{\X}{X}
\DeclareMathOperator*{\U}{U}
\DeclareMathOperator*{\F}{F}
\DeclareMathOperator*{\R}{R}
\begin{document}
\counterwithout{lstlisting}{chapter}

\title{From \acs{stpa} to \aclp{sbm}\thanks{This research has been partly funded by the \ac{bmdv} within the project ``CAPTN Förde 5G''.}}

\author{Jette Petzold\orcidID{0000-0002-5559-7073} \and
Reinhard von Hanxleden\orcidID{0000-0001-5691-1215}}

\authorrunning{J. Petzold, R. von Hanxleden}

\institute{Department of Computer Science, Kiel University, Kiel, Germany\\
\email{\{jep,rvh\}@informatik.uni-kiel.de}}

\maketitle
\begin{abstract}
Model checking is a proven approach for checking whether the behavior model of a safety-critical system fulfills safety properties that are stated as \acs{ltl} formulas.
We propose rules for generating such \acs{ltl} formulas automatically based on the result of the risk analysis technique \ac{stpa}.
Additionally, we propose a synthesis of a \acl{sbm} from these generated \acs{ltl} formulas.
To also cover liveness properties in the model, we extend \ac{stpa} with \aclp{dca}.
We demonstrate our approach on an example system using \acsp{scchart} for the behavior model.
The resulting model is not necessarily complete but provides a good foundation that already covers safety and liveness properties.

\keywords{\acs{stpa}  \and \acs{ltl} \and Behavior Model.}
\end{abstract}

\acresetall
\section{Introduction}
\label{sec:intro}

Verification is an important part of system development to ensure the safety of the system as well as its functionality.
One technique to verify a model of a system is \emph{model checking}~\cite{Clarke97}.
The system specifications are translated into \ac{ltl} formulas and a model checker determines whether the model fulfills them.
These formulas may cover safety properties as well as liveness properties.
If a formula is not fulfilled, a counterexample is generated by the model checker and the model can be adjusted and checked again.
A part of the model checking process is the translation of an \ac{ltl} formula to a Büchi automaton~\cite{GerthPVW95}.
This process is computationally expensive~\cite{GastinO01}, which is why significant research focuses on improving this translation~\cite{SomenziB00, DanieleGV99, GastinO01, BabiakKRS12, GiannakopoulouL02}.

However, creating the \ac{ltl} formulas and the model of the system in the first place is also non-trivial.
Creating models is very time-consuming, which is why many techniques exist to generate them automatically~\cite{UchitelKM01,KruegerGSB98m, DamasLL06, UchitelBC08, DippolitoBPU10}.
Additionally, if the safety of the model can already be guaranteed by the construction process, less time is needed to verify it.
The creation of the \ac{ltl} formulas can also be supported.
Generating the formulas automatically reduces the time effort and the risk of mistakes in the formulas.
If a safety property would be translated wrongly or forgotten, the verification process might overlook flaws in the model of the system.

In order to determine the safety properties for the system, \ac{stpa} \cite{LevesonT18} can be used.
\ac{stpa} is a risk analysis technique that focuses on unsafe interactions between system components and identifies more risks than traditional hazard analysis techniques \cite{Leveson16}.
In this paper we propose rules to automatically translate the resulting safety properties to \ac{ltl} formulas ensuring that no property is forgotten and reducing the time effort for creating the formulas.
Based on these formulas, we propose a synthesis of a \ac{sbm} as a statechart for the analyzed system.

\subsubsection{Contributions \& Outline}
\autoref{sec:background} introduces \ac{stpa}, the used Statecharts definition, and \ac{ltl} formulas.
\autoref{sec:relWork} reviews translation rules from \ac{stpa} to \ac{ltl} as presented by Abdulkhaleq \etal and 
the different already existing approaches for the generation of Büchi automaton from \ac{ltl} formulas.
Our main technical contributions, presented in the next three sections, are as follows:
\begin{itemize}
	\item We expand the translation of the \ac{stpa} results to \ac{ltl} formulas (\autoref{sec:ltlTranslation}).
	\item We present an approach to create an \ac{sbm} based on these formulas (\autoref{sec:sbmGeneration}).
	\item We extend \ac{stpa} with \acfp{dca}, which are needed to not only cover safety properties but liveness properties as well (\autoref{sec:dca}).
\end{itemize}

An implementation of the proposed \ac{sbm} synthesis is presented in \autoref{sec:impl}.
\autoref{sec:eva} discusses the approach and finally, \autoref{sec:conclusion} concludes the paper.

\section{Background}
\label{sec:background}

We give a short introduction of \ac{stpa} in \autoref{sec:stpa}, especially the \acp{uca} for which we propose, 
in \autoref{sec:ltlTranslation}, a translation to \ac{ltl} formulas.
\autoref{sec:statechart} presents the statechart definition used in this paper and \autoref{sec:ltl} explains the \ac{ltl} operators used in the presented formulas.

\subsection{\acs{stpa}}
\label{sec:stpa}

\ac{stpa} is a hazard analysis technique for safety critical systems  \cite{LevesonT18}.
It focuses on unsafe interactions between system components, unlike traditional techniques such as \ac{fta}, which focuses
on component failures.
The \ac{stpa} process consists of four steps \cite{LevesonT18}:

\begin{enumerate}
	\item Define the purpose of the analysis;
	\item Model the Control Structure;
	\item Identify \acfp{uca};
	\item Identify Loss Scenarios.
\end{enumerate}

In the first phase, the losses that should be prevented and hazards that lead to these losses are defined.
The control structure modeled in the second step consists of controllers, controlled processes, and possibly actuators and sensors.
For the controller \emph{control actions} are defined that can be sent to controlled processes, and the controlled processes send \emph{feedback} to the controllers.
A controller also has a process model that contains \emph{process model variables} that contain information about the controlled process, the environment, etc.

The third step is the most important one for our contributions.
Here, the control actions of the control structure are inspected.
The analyst defines \emph{\acfp{uca}} by determining in which contexts a control action causes a hazard.
The context can be stated informally by describing it, or more formally by using context tables proposed by Thomas \cite{Thomas13}.
When using context tables, the context is defined by assigning values to the process model variables.
For each control action a separate context table is created.
Each column represents a process model variable and each row a possible combination of their values called the \emph{context}.
Then, for each context the analyst can determine whether the control action is hazardous for any \ac{uca} type.
The basic \ac{uca} types are \textsc{provided} and \textsc{not-provided}.
The first one means that providing the control action leads to a hazard, while the second type states that not providing the control action leads to a hazard.
For contexts in which the timing is relevant the types \textsc{too-late} and \textsc{too-early} are used,
and for continuous control actions the types \textsc{applied-too-long} and \textsc{stopped-too-soon} must be considered as well.
Each of these types further specifies the moment in which (not) sending a control action leads to a hazard.

For these \acp{uca} \emph{controller constraints} are defined, which are the safety properties.
In the last \ac{stpa} step loss scenarios are defined that lead to hazards.

Several tools exist that support the application of \ac{stpa}, for example \ac{pasta}~\cite{PetzoldKH23}.
\ac{pasta} is a \ac{vscode} Extension that provides a \ac{dsl} with automatic visualization of the defined components and their relationships.
\ac{pasta} supports the informal definition of \acp{uca} as well as context tables.

\subsection{Statecharts}
\label{sec:statechart}

Statecharts are \acp{fsm} that are extended with hierarchy, concurrency, and communication~\cite{Harel87}.
We define a statechart $M$ based on the \ac{efsm} definition
\cite{ChengK93} as the 8-tuple $(S, I, O, D, F, U, T, s_0)$, where

\begin{itemize}
	\item $S$ is a set of \emph{states},
	\item $I$ is an $n$-dimensional space $I_1 \times \dots \times I_n$ that represents the \emph{input variables},
	\item $O$ is an $m$-dimensional space $O_1 \times \dots \times O_m$ representing the \emph{output variables},
	\item $D$ is a $p$-dimensional space $D_1 \times \dots \times D_p$ that represents the \emph{internal variables},
	\item $F$ is a set of \emph{enabling functions} $f_i$ with $f_i : D \to \{0,1\}$ that define the \emph{triggers} of the transitions,
	\item $U$ is a set of \emph{update functions} $u_i$ with $u_i : D \to D$,
	\item $T$ is a \emph{transition function} with $T : S \times F \times I \to S \times U \times O$,
	\item $s_0 \in S$ is the \emph{initial state}.
\end{itemize}

An \emph{observable trace} is a sequence of in- and outputs:
$((x_0, y_0), (x_1, y_1), \dots)$ with $x_i\in I$ input and $y_i\in O$ output in reaction $i$.
In contrast, an \emph{execution trace} also includes the states:
$((x_0, s_0, y_0), (x_1, s_1, y_1), \dots)$ with $\forall i\in \mathbb{N}: ((s_i, f_i, x_i), (s_{i+1}, u_i, y_i))\in T$  \cite{LeeS17}.
We extend this definition to include the internal variables as well:
$((x_0, z_0, s_0, y_0), (x_1, z_1, s_1, y_1), \dots)$ with $z_i$ internal variables at reaction $i$.
In the following we will just use \emph{trace} to refer to an execution trace.

\subsection{\acl{ltl}}
\label{sec:ltl}

In the following we introduce \ac{ltl} formulas for \ac{efsm} traces based on the definition by Lee and Seshia~\cite{LeeS17}.
The atomic propositions of an \ac{ltl} formula are:
\begin{itemize}
	\item $\True$;
	\item $\False$;
	\item $s$, which is $\True$ if the statechart is in state $s$;
	\item $x = v$, which is $\True$ if the variable $x$ has the value $v$.
\end{itemize}

Additionally, the boolean operators $\land, \lor, \neg$, and $\to$ can be used.
An \ac{ltl} formula $\varphi$ applies to an entire trace $t= t_0, t_1, \dots$ and holds for that trace iff $\varphi$ is true in $t_0$.
If $\varphi$ holds for all possible traces of a statechart $M$, $\varphi$ holds for $M$.
In the following let $t= t_0, t_1, \dots$ be a trace.
In order to reason about the entire trace, special temporal operators can be used:

\begin{itemize}
	\item the \emph{globally} operator $\G\varphi$ holds for $t$ when $\varphi$ holds for every suffix of $t$;
	\item the \emph{finally} operator $\F\varphi$ holds for $t$ if $\varphi$ holds for some suffix of $t$;
	\item the \emph{next} operator $\X\varphi$ holds for $t$ if $\varphi$ holds for $t_1, t_2, \dots$;
	\item the \emph{until} operator $\varphi_1 \U\varphi_2$ holds for $t$ if $\varphi_2$ holds for some suffix of $t$ and $\varphi_1$ holds until and including when $\varphi_2$ becomes $\True$;
	\item the \emph{release} operator $\varphi_1 \R \varphi_2$ holds if $\neg (\neg \varphi_1 \U \neg\varphi_2)$ holds. 
	It states that $\varphi_2$ must hold until and including when $\varphi_1 \land \varphi_2$ is $\True$ for a reaction. If $\varphi_1$ never holds, then $\varphi_2$ must hold forever.
\end{itemize}

\section{Related Work}
\label{sec:relWork}

Abdulkhaleq \etal already propose an approach for creating \ac{ltl} formulas based on \acp{uca}~\cite{AbdulkhaleqW16}.
They first define \emph{\acp{ruca}} that contain the control action, the context $CS:=\{x_i=v_i \mid x_i \text{ is a process model variable}\}$, and the type.
These \acp{ruca} are used to automatically generate \emph{\acp{rssr}}, which again are automatically translated into \ac{ltl} formulas.
The proposed \ac{ltl} formula for a context $cv := \bigwedge_{\varphi \in CS} \varphi$, 
control action $CA$, subformula $ca := sent(CA)$, and type \textsc{provided} is the following:

\begin{equation}
	\label{eq:provided}
	\G (cv \to \neg ca)
\end{equation}

It states that every time the context holds, the control action is not allowed to be sent. 
We will also use this formula for our approach.
However, we do not agree with the translations for the 
types \textsc{too-late}, \textsc{too-early}, and \textsc{not-provided} and hence provide new rules.
Additionally, we propose rules for \textsc{applied-too-long} and \textsc{stopped-too-soon}, which is not done by Abdulkhaleq \etal

To verify a model according to an \ac{ltl} formula, the negated formula is translated to a Büchi automaton, the product with the model is built, and the resulting automaton is checked for emptiness~\cite{GerthPVW95}.
Since the product automaton grows considerable in size with growing size of the Büchi automaton, 
much research focuses on optimizing the translation of an \ac{ltl} formula to a Büchi automaton to reduce the needed memory
and translation time~\cite{SomenziB00, DanieleGV99, GastinO01, BabiakKRS12, GiannakopoulouL02}.
In contrast to these works, we want to synthesize a statechart 
that can be used as the behavior model of the system.
This statechart can then be used to 
synthesize code.

For syntheses of behavior models, \ac{fltl}~\cite{GiannakopoulouM03} or scenario specifications are used.
Scenarios can be specified with \acp{msc} describing the interactions between system components and the environment.
Syntheses from \acp{msc} to \ac{lts} are presented for example by Uchitel~\etal~\cite{UchitelKM01} or Damas~\etal~\cite{DamasLL06}.
Since, \ac{lts} cannot distinguish between possible and required behavior Uchitel \etal propose a synthesis from \acp{msc} to \acp{mts}~\cite{UchitelBC08}.

Krüger \etal's approach translates \acp{msc} to statecharts~\cite{KruegerGSB98m}.
The \ac{msc} is translated to a so called \emph{\acs{msc}-automaton}.
The transitions of this automaton are then translated to intermediate states and transitions.
We use \ac{stpa} instead of \acp{msc}.
This has the advantage that the results of a risk analysis that must be done anyway can be used and no additional time is needed to create \acp{msc}.
Additionally, this eliminates the problem of \emph{implied scenarios}.
Such scenarios occur when scenarios are combined in unexpected ways resulting in unexpected system behavior not covered in the scenario specification~\cite{UchitelKM01}.

\section{\acs{stpa} to \acs{ltl}}
\label{sec:ltlTranslation}

When using context tables for specifying \acp{uca}, we implicitly have \acp{ruca} and can create \ac{ltl} formulas that prevent \acp{uca}.
The translation rules depend on the type of the \ac{uca}: \textsc{not-provided}, \textsc{provided}, \textsc{too-early}, \textsc{too-late}, \textsc{applied-too-long}, or \textsc{stopped-too-soon}.
Abdulkhaleq \etal present rules for the first four types but not for the last two.
While we agree with the rule for the \textsc{provided} type, it is not clear whether the rule for \textsc{not-provided} is correct.
This depends on how \textsc{not-provided} should be interpreted, which is discussed in \autoref{sec:not provided}.
In \autoref{sec:too-late} and \autoref{sec:too-early}, we propose more precise rules for \textsc{too-late} and \textsc{too-early}.
Additionally, we propose rules for the missing types \textsc{applied-too-long} in \autoref{sec:applied-too-long} and \textsc{stopped-too-soon} in \autoref{sec:stopped-too-soon}.

We define $PM_C:=\{PMV_1, \ldots, PMV_n\}$ as the process model of a controller $C$, 
where $PMV_i:=(x_i,\{v_1, \ldots, v_m\})$ with $1 \leq i \leq n$ is a process model variable with the name $x_i$ and possible values $v_1, \ldots, v_m$.
In the following we use an arbitrary \ac{uca} with control action $CA$
and context $CS=\{x_i=v_j \mid x_i, v_j \in PMV_i\}$ to explain the translation rules.
We define subformulas for the context variables $cv := \bigwedge_{\varphi \in CS} \varphi$,
meaning that the context in which $CA$ is hazardous is present,
and control action $ca := sent(CA)$,  meaning $CA$ is sent.
We will use a short form of the trace definition:
We use $(cv, ca)$ or $(\neg cv, \neg ca)$ instead of $(x, z, s, y)$, meaning the input and internal variables $x$ and $z$ are set according to $cv$ or $\neg cv$, respectively. 
The variables that do not occur in $cv$ can have any value,
the state $s$ can be any state, and the output $y$ must (not) contain $CA$.
\autoref{f:formulas-summary} gives an overview of the traces we prevent with the proposed formulas, as elaborated in the following.

\begin{figure*}[htp]
	\centering
	\begin{subfigure}{.56\linewidth}
		\centering
		\fbox{	
			\begin{tikzpicture}[scale=.23]
				\draw (2,2) node (trace) {$\cdots$, ($\neg cv,\neg ca$), ($cv,\neg ca$), ($\neg cv,\neg ca$), $\cdots$};
				
				\draw (-4.25,-2) node (implicant) {\textcolor{implicant}{$\neg cv\wedge \X cv$} $\checkmark$};
				\draw (9.5,-2) node (implicand) {\textcolor{implicand}{$ca \R cv$} $\text{\lightning}$};
				
				\draw[->] ($(implicant.north)-(0.75,0)$) -- ($(trace.south)-(7, 0)$);
				\draw[->] ($(implicand.north)-(0.5,0)$) -- ($(trace.south)+(7, 0)$);
			\end{tikzpicture}
		}
		\caption{\centering Control action is not provided.\newline $\G($\textcolor{implicant}{$(\neg cv \land \X cv$)}$ \to \X ($\textcolor{implicand}{$ca \R cv$}$\land \F ca))$.}
		\label{f:traces-not-provided}
	\end{subfigure}
	\hfill
	\begin{subfigure}{.4\linewidth}
		\centering
		\fbox{	
			\begin{tikzpicture}[scale=.23]
				\draw (2,2) node (trace) {$\cdots$, ($cv,ca$), $\cdots$};
				
				\draw (0,-2) node (implicant) {\textcolor{implicant}{$cv$} $\checkmark$};
				\draw (4,-2) node (implicand) {\textcolor{implicand}{$\neg ca$} $\text{\lightning}$};
				
				\draw[->] ($(implicant.north)+(2,0)$) -- ($(trace.south)$);
			\end{tikzpicture}
		}
		\caption{\centering Control action is provided.\newline $\G ($\textcolor{implicant}{$cv$}$\to $\textcolor{implicand}{$\neg ca$}$)$}
		\label{f:traces-provided}
	\end{subfigure}
	\par\medskip
	\begin{subfigure}{.47\linewidth}
		\centering
		\fbox{	
			\begin{tikzpicture}[scale=.23]
				\draw (2,2) node (trace) {$\cdots$, ($\neg cv,ca$), ($cv,ca$), $\cdots$};
				
				\draw (-3,-2) node (implicant) {\textcolor{implicant}{$\neg cv \wedge \X cv$} $\checkmark$};
				\draw (3.5,-2) node (implicand) {\textcolor{implicand}{$\neg ca$} $\text{\lightning}$};
				
				\draw[->] ($(implicant.north)+(2.5,0)$) -- ($(trace.south)-(2.5, 0)$);
			\end{tikzpicture}
		}
		\caption{\centering Control action is provided too early.\newline $\G($\textcolor{implicant}{$(\neg cv \wedge \X cv)$}$\to$\textcolor{implicand}{$\neg ca$}$)$}
		\label{f:traces-too-early}
	\end{subfigure}
	\hfill
	\begin{subfigure}{.5\linewidth}
		\centering
		\fbox{	
			\begin{tikzpicture}[scale=.23]
				\draw (2,2) node (trace) {$\cdots$, ($\neg cv,\neg ca$), ($cv,\neg ca$), $\cdots$};
				
				\draw (-1,-2) node (implicant) {\textcolor{implicant}{$\neg cv$} $\checkmark$};
				\draw (6.25,-2) node (implicand) {\textcolor{implicand}{$cv \to ca$} $\text{\lightning}$};
				
				\draw[->] ($(implicant.north)-(0.5,0)$) -- ($(trace.south)-(3.5, 0)$);
				\draw[->] ($(implicand.north)-(0.75,0)$) -- ($(trace.south)+(3.5, 0)$);
			\end{tikzpicture}
		}
		\caption{\centering Control action is provided too late.\newline $\G($\textcolor{implicant}{$\neg cv$}$\to \X$\textcolor{implicand}{$(cv \to ca)$}$)$}
		\label{f:traces-too-late}
	\end{subfigure}
	\par\medskip
	\begin{subfigure}{.47\linewidth}
		\centering
		\fbox{	
			\begin{tikzpicture}[scale=.23]
				\draw (2,2) node (trace) {$\cdots$, ($cv,ca$), $\ \ $  ($\neg cv, ca$), $\cdots$};
				
				\draw (-1.5,-2) node (implicant) {\textcolor{implicant}{$cv \wedge ca$} $\checkmark$};
				\draw (6.75,-2) node (implicand) {\textcolor{implicand}{$\neg cv \to \neg ca$} $\text{\lightning}$};
				
				\draw[->] ($(implicant.north)-(0.5,0)$) -- ($(trace.south)-(4, 0)$);
				\draw[->] ($(implicand.north)-(0.75,0)$) -- ($(trace.south)+(4, 0)$);
			\end{tikzpicture}
		}
		\caption{\centering Control action is applied too long.\newline $\G($\textcolor{implicant}{$(cv \wedge ca)$}$\to \X$\textcolor{implicand}{$(\neg cv \to \neg ca)$}$)$}
		\label{f:traces-applied-too-long}
	\end{subfigure}
	\hfill
	\begin{subfigure}{.5\linewidth}
		\centering
		\fbox{	
			\begin{tikzpicture}[scale=.23]		
				\draw (2,2) node (trace) {$\cdots$, ($cv,ca$), $\ \ $  ($cv,\neg ca$), $\cdots$};
				
				\draw (-1.5,-2) node (implicant) {\textcolor{implicant}{$cv \wedge ca$} $\checkmark$};
				\draw (6.75,-2) node (implicand) {\textcolor{implicand}{$\neg ca \to \neg cv$} $\text{\lightning}$};
				
				\draw[->] ($(implicant.north)-(0.5,0)$) -- ($(trace.south)-(4, 0)$);
				\draw[->] ($(implicand.north)-(0.75,0)$) -- ($(trace.south)+(4, 0)$);
			\end{tikzpicture}
		}
		\caption{\centering Control action is stopped too soon.\newline $\G ($\textcolor{implicant}{$(cv \wedge ca)$}$\to \X$\textcolor{implicand}{$(\neg ca \to \neg cv)$}$)$}
		\label{f:traces-stopped-too-soon}
	\end{subfigure}
	\caption{Traces for the different \ac{uca} types that are prevented by the proposed formulas. Here, we do not consider the first reaction of a trace.}
	\label{f:formulas-summary}
\end{figure*}
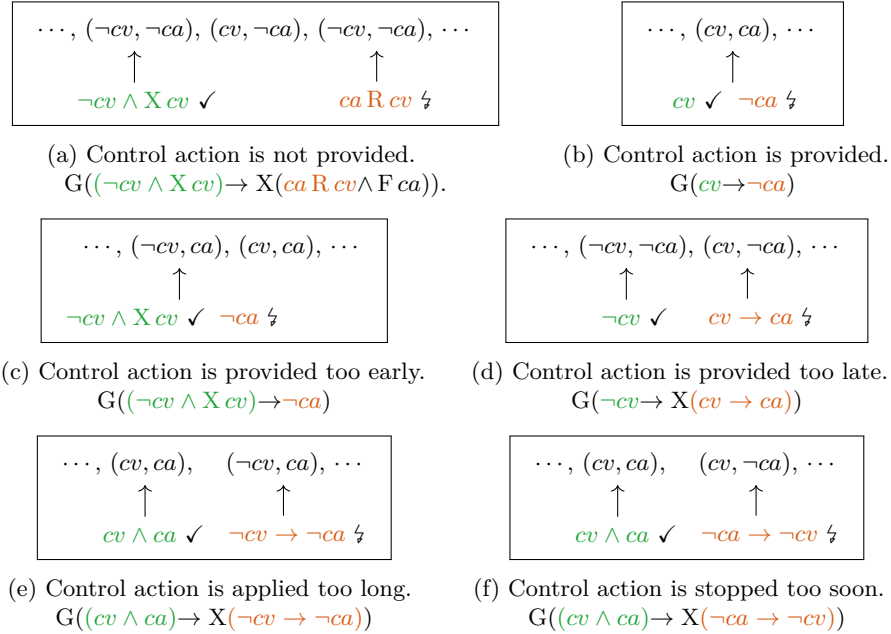

\subsection{Formula for Not-Provided}
\label{sec:not provided}

Stating that not providing a control action is hazardous in a given context can be interpreted in two ways.
On the one hand, this could mean that in every reaction where the context holds the control action must be sent to prevent a hazard.
On the other hand, it could mean that during the timespan where the context holds continuously the control action must be sent at least once.
The rule stated by Abdulkhaleq \etal~\cite{AbdulkhaleqW16} covers the first interpretation:
\[\G (cv \to ca)\]

However, this interpretation also already ensures that the control action is not sent too late and is not stopped too soon.
That is why we propose a formula for the second interpretation. 
Let 
\[\psi = cv \to (ca \R cv \land \F ca),\]
\[\chi = \G((\neg cv \land \X cv) \to \X (ca \R cv \land \F ca))\]
then we translate a \ac{uca} of type \textsc{not-provided} to the following formula:
\begin{equation}
	\label{eq:not-provided}
	\psi \land \chi
\end{equation}

In $\chi$ the implicant $\neg cv \land \X cv$ holds in the reactions directly before $cv$ changes from $\False$ to $\True$.
This means the next reaction is the first time where the context holds, and in this reaction $((ca \R cv) \land \F ca)$ should hold.
$(ca \R cv)$ ensures that when $cv$ changes to $\False$, 
the control action has to have been sent before.
Thus, a trace as shown in \autoref{f:traces-not-provided}
would evaluate to $\False$ since the context changes to $\False$ although the control action was not yet sent.
Since $(ca \R cv)$ evaluates to $\True$ when the control action is not sent as long as the context holds indefinitely, we ensure with $\F ca$ that eventually the control action will be sent.
Hence, the implicand ensures that after the context switches from $\False$ to $\True$, the control action is at least sent once while the context holds.
However, when a trace starts with $cv$ being $\True$, the implication evaluates to $\True$ regardless whether the control action is sent before $cv$ changes to $\False$ since the implicant evaluates to $\False$.
$\psi$ ensures that also in the first reaction when $cv$ is $\True$, $ca$ must be sent at least once before it changes to $\False$.

\subsection{Formula for Too-Late}
\label{sec:too-late}
Abdulkhaleq \etal propose the following formula for \acp{uca} of type \textsc{too-late}~\cite{AbdulkhaleqW16}:
\[\G((cv \to ca) \land \neg(cv \U ca))\]
This formula can only be fulfilled if the first conjunct is fulfilled, namely $cv \to ca$.
This means traces such as $(\dots, (\neg cv, \neg ca), (cv,ca), (cv, \neg ca), \dots)$ 
evaluate to $\False$ although the control action is not provided too late.
Hence, this formula covers more than just the \textsc{too-late} \ac{uca} type.
Even in cases where $cv \to ca$ holds, the complete formula evaluates to $\False$ for some trace in which the control action is not provided too late.
For example, for the trace $((cv,ca), (cv, ca), \dots)$. 
Since $cv U ca$ evaluates to $\True$ for this trace, $\neg(cv \U ca)$ evaluates to $\False$ and hence the complete formula evaluates to $\False$.

We argue that 
we only have to look at the first moment the control action should be provided.
The control action should be sent instantly when the context holds.
This leads to the following formula:
\begin{equation}
	\label{eq:too-late}
	(cv \to ca) \land \G(\neg cv \to \X(cv \to ca))
\end{equation}
The second conjunct ensures that traces such as shown in \autoref{f:traces-too-late} do not occur.
The moment the control action should be applied is when the context currently holds and in the previous reaction
did not hold.
In the formula we capture this in the following way:
The control action should be applied in the next reaction if the context currently does not hold and in the next reaction does hold.
If the control action is not applied, the formula evaluates to $\False$.
The first conjunct of \autoref{eq:too-late}, namely $cv \to ca$, just ensures that the \ac{uca} does not occur in the first reaction.
If the context already holds in the first reaction we must apply the control action immediately.

\subsection{Formula for Too-Early}
\label{sec:too-early}
For the \acp{uca} of type \textsc{too-early} Abdulkhaleq \etal propose the following formula~\cite{AbdulkhaleqW16}:
\[\G((ca \to cv) \land \neg(ca \U cv))\]
We see here the same problem as before.
The first conjunct, namely $ca \to cv$, ensures that the control action is not sent too early, but this again covers too much.
For traces such as $(\dots, (\neg cv, \neg ca), (cv,ca), (\neg cv, ca), \dots)$, where the control action is not sent too early, the formula evaluates to $\False$ because of the reaction $(\neg cv, ca)$ 
and hence the complete formula evaluates to $\False$.
Let us examine a trace where the control action is not sent too early and the subformula holds:
$((cv,ca), (\neg cv, \neg ca), \dots)$.
Since in the first reaction $cv$ and $ca$ hold, the formula $ca \U cv$ evaluates to $\True$ and hence $\neg(ca \U cv)$ to $\False$, which is why the whole formula is evaluated to $\False$.

We propose a formula that only inspects the moment where $cv$ switches from $\False$ to $\True$:
\begin{equation}
	\label{eq:too-early}
	\G((\neg cv \land \X cv) \to \neg ca)
\end{equation}

The reaction where the context holds but did not hold in the previous reaction is the first one where the control action is allowed to be sent.
Hence, we must ensure that before this reaction, the control action is not sent (see \autoref{f:traces-too-early}).
The implicant, namely $\neg cv \land \X cv$, is $\True$ if the current reaction is the last one in which $cv$ is $\False$ before it switches to $\True$.
In such a reaction the control action is not allowed to be sent, which is guaranteed by the implication.

\subsection{Formula for Applied-Too-Long}
\label{sec:applied-too-long}

For \acp{uca} of type \textsc{applied-too-long} we propose the following formula:
\begin{equation}
	\label{eq:applied-too-long}
	\G((cv \land ca) \to \X (\neg cv \to \neg ca))
\end{equation}

To ensure an action is not applied too long, we have to inspect the reactions where the control action is already applied while the context holds ($cv \land ca$).
In these reactions, we must ensure that the control action is not sent anymore at the latest when the context does not hold any longer.
Hence, for each reaction in which $cv$ and $ca$ are $\True$ ($cv \wedge ca$), we must check whether the context still holds in the next reaction ($\X(\dots)$).
If it does not, the control action must not be sent, which is guaranteed by the subformula $\neg cv \to \neg ca$.
We are not interested in the traces where the control action is stopped before the context does no longer hold.
This is only relevant for \acp{uca} of type \textsc{stopped-too-soon}.
The formula only evaluates to $\False$ for traces such as shown in \autoref{f:traces-applied-too-long}.

\subsection{Formula for Stopped-Too-Soon}
\label{sec:stopped-too-soon}

In order to guarantee that a control action is not stopped too soon, we must ensure that after it is sent the first time it is continuously sent until the context does not longer hold.
We propose a similar formula as for
\textsc{applied-too-long}:
\begin{equation}
	\label{eq:stopped-too-soon}
	\G ((cv \land ca) \to \X (\neg ca \to \neg cv))
\end{equation}

Again, we are interested in the reactions where the context already holds and the control action is applied ($cv \land ca$).
In such reactions, the formula ensures that if the control action is not sent anymore in the next reaction, then the context does not hold in the next reaction ($\X (\neg ca \to \neg cv)$).
This way traces such as shown in \autoref{f:traces-stopped-too-soon} where the control action is stopped too soon are prevented.

\section{\acs{stpa} to \acs{sbm}}
\label{sec:sbmGeneration}
The \ac{ltl} formulas generated based on the identified \acp{uca} can be used for model checking the behavior models of the software controllers.
Those \emph{\acfp{sbm}} are created manually by the software developers with the help of supporting tools such as Simulink.
Abdulkhaleq \etal~\cite{AbdulkhaleqW16} present an approach to combine the generated \ac{ltl} formulas with an \ac{sbm} to create a \ac{smv} model that can be verified using for example NuSmv \cite{CimattiCGR99}. 
This way, the safety of the model is examined.
However, creating an \ac{sbm} in the first place is time-consuming and error-prone.

We now propose an approach for the automatic generation of deterministic \acp{sbm} based on the \ac{ltl} formulas generated for the controller that should be modeled.
This way, the developer can work with an initial \ac{sbm} that already fulfills the \ac{ltl} formulas that are used for the generation.
For this automatic generation we assume that no contradicting \acp{uca} exist.
Conflicts have to be solved before applying the generation.
Additionally, we assume for simplicity that only one control action is sent at a time 
and that the initial reaction is used to set up the system such that we do not have to consider subformulas that are only checked on the initial reaction.

The first question we need to answer when generating an \ac{sbm} from \ac{stpa} is how we determine the states for the model, which is explained in \autoref{sec:sbm-gen-states}. 
Afterwards, \autoref{sec:sbm-gen-variables} introduces the determination of the variables in the \ac{sbm}.
\autoref{sec:sbm-gen-not-provided} - \autoref{sec:sbm-gen-stopped-too-soon} present a translation for each \ac{uca} type from the corresponding \ac{ltl} formula to transitions and possibly states
as well as their interaction with the other \ac{ltl} formula translations.
For this we use the previously introduced formulas.
Finally, we optimize the constructed \ac{sbm} (\autoref{sec:sbm-gen-optimization}) and prove that it fulfills the \ac{ltl} formulas except the ones for \textsc{too-early} (\autoref{sec:sbm-gen-proof}).

\subsection{States}
\label{sec:sbm-gen-states}

A straightforward approach to determine the states for the model would be to use each context and control action combination.
However, this would blow up the state space, and it is not always necessary to differentiate between states with the same control action but different context.
Thus, we start with states that only represent a control action each and an initial state that represents that no control action is sent.
Additional states are added during the translation of \ac{ltl} formulas if necessary.
Hence, for a controller $C$ we start with a statechart $M=(S, I, O, D, F, U, T, s_0)$
with $S=s_0 \cup \{ s_{CA} \mid CA \text{ is a control action of } C\}$ and $F = U = T=\emptyset$.

\subsection{Variables}
\label{sec:sbm-gen-variables}

The variables of the statechart are determined based on the process model of the controller $C$.
All process model variables are translated to internal variables, meaning we declare $D=\{x \mid (x,\_)\in PM_C\}$.
We infer the type of each input variable from the values it can be assigned.
If the values are $\True$ and $\False$, the type of the variable is \emph{boolean}, otherwise it is a \emph{number}.
The input variables are composed of the variables that are used for the possible values of the process model variables.
Hence, we set $I=\{v \mid v\in V, (\_, V) \in PM_C, v \text{ is variable} \}$.
We track the control action that is sent in an additional variable named \emph{control\-Action}.
Since the chosen control action should be sent to a system component, we declare \emph{controlAction} as an output variable: $O = \{controlAction\}$.
The type of this variable depends on the implementation: \emph{string} can be used or an \emph{enum} can be created containing all possible control actions.
Two options exist to set the value of this variable in each reaction:
Each state can define an entry action setting \emph{controlAction} to the value the state represents,
or each transition must set \emph{controlAction} according to the target state.
In the following we will use the first option, meaning we can concentrate on the triggers of the transitions.

\subsection{Not-Provided Transitions}
\label{sec:sbm-gen-not-provided}
The translation of \ac{ltl} formulas for \acp{uca} of type \textsc{not-provided} can be seen in \autoref{f:not-provided}.
We cannot automatically determine at which moment in the timespan, where the context holds, the control action should be sent.
Thus, we set the moment where the control action must be sent to the first reaction where $cv$ holds, which also covers the \textsc{too-late} type.
Since each state in our statechart represents that a specific control action is sent,
we must ensure that if the context holds, we go to the state representing the control action.
For each formula we add transitions from the states not representing $ca$ to the one representing it:
$T(s, f_{s}, i) = (s_{ca}, \emptyset, \emptyset)$ if $f_{s}(d)=1$, with  $s \neq s_{ca}$, $i\in I$, $d \in D$, and $f_{s}(cv) = 1$.

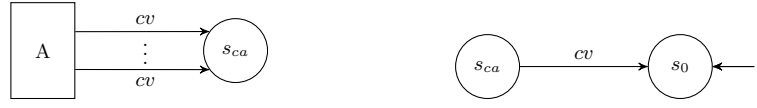
\begin{figure*}[htp]
	\centering
	\begin{subfigure}{.52\linewidth}
		\centering
			\begin{tikzpicture}[>=stealth', scale=.85, transform shape]
				\node[draw, minimum width=1cm, minimum height=1.5cm] (A) {A};
				\node[circle, draw, minimum size=1cm, right=2cm of A] (C) {$s_{ca}$};
				
				\draw[->] ($(A.east)-(0, .296)$) -- (C.216);
				\draw[->] ($(A.east)+(0, .296)$) -- (C.144);
				
				\node[left = 1cm of C.216, below] (l1) {\(cv\)};
				\node[left = 1cm of C.144, above] (l2) {\(cv\)};
				\node at ($(l1)!.5!(l2)$) {\(\strut\vdots\)};
			\end{tikzpicture}
		\caption{Translation of \acs{uca} type \textsc{not-provided}.}
		\label{f:not-provided}
	\end{subfigure}
	\hfill
	\begin{subfigure}{.45\linewidth}
		\centering
			\begin{tikzpicture}[>=stealth', scale=.85, transform shape]
				\node[circle, draw, minimum size=1cm] (C) {$s_{ca}$};
				\node[circle, draw, minimum size=1cm, right=2cm of C, initial, initial text = {}, initial distance = 0.65cm, initial right] (CC) {$s_0$};
				
				\draw[->] (C) -- node[above] {\(cv\)} (CC);
			\end{tikzpicture}
		\caption{Translation of \acs{uca} type \textsc{provided}.}
		\label{f:provided}
	\end{subfigure}

	\caption{Translation of \acs{uca} types \textsc{provided} and \textsc{not-provided} to statecharts. $A$ is the statechart without the state $s_{ca}$.}
	\label{f:translations-to-sbm}
\end{figure*}

\subsection{Too-Early Transitions}
\label{sec:sbm-gen-too-early}

\ac{ltl} formulas for the \textsc{too-early} \ac{uca} type (\autoref{eq:too-early}) cannot be translated.
In these formulas the context in the next reaction constrains the current control action. 
Since we cannot see the future, we cannot depict that.

\subsection{Too-Late Transitions}
\label{sec:sbm-gen-too-late}

The first conjunct of the \ac{ltl} formula for the \textsc{too-late} type (\autoref{eq:too-late}) is only relevant for the first reaction.
Since we assume that the initial reaction is used for setting up the system, we can ignore it.
For the second conjunct, we must remember the context in the previous reaction.
We could do that by adding new states $s_{a\_cv}$ to $S$, which represent that $cv$ holds and $a$ is sent, for each $a \neq ca$.
Then, we could add transitions from $s_{a\_cv}$ to $s_{a}$ that trigger when $cv$ does not hold.
The incoming transitions of $s_{a}$ are split such that the ones where the trigger contains $\neg cv$ remain and the ones containing $cv$ are changed such that they go to $s_{a\_cv}$ instead.
Since $s_{a}$ represents that $cv$ did not hold in the last reaction, we could add transitions from $s_{a}$ to $s_{ca}$ that trigger when $cv$ holds.
This way, we would depict the implication $\neg cv \to \X(cv \to ca)$.
However, the $s_{a\_cv}$ states would be unreachable.
The only way a transition would go from an arbitrary state to $s_{a\_cv}$ would be if a \ac{uca} of type not-providing exists for the control action $a$ and context $cv$.
This would be a contradiction to the current inspected formula that $ca$ should not be sent too late in context $cv$ and hence does not occur in a correct analysis.
In conclusion, we do not need extra states, we just need to add transitions from states not representing $ca$ to $s_{ca}$ just as done for \textsc{not-provided}.

\subsection{Provided Transitions}
\label{sec:sbm-gen-provided}

The translation for \acp{uca} of type \textsc{provided} is depicted in \autoref{f:provided}.
In order to fulfill these formulas, 
we need contrary transitions to the ones introduced for the \textsc{not-provided} \ac{uca} type.
We must ensure that if the context holds, the control action is not sent.
This is done by adding a transition from $s_{ca}$ to the initial state:
$T(s_{ca}, f_{s}, i) = (s_0, \emptyset, \emptyset)$ if $f_{s}(d)=1$, with $i\in I$, $d \in D$, and $f_{s}(cv) = 1$.
However, we must consider that for the same context or a context containing $cv$ a \ac{uca} of type \textsc{not-provided} or \textsc{too-late} may have been defined.
In the first case, we already have a transition that leaves $s_{ca}$ when $cv$ holds and we do not have to add another one to the initial state.
In the second case, we have to further specify the trigger such that it does not evaluate to $\True$ if the other transitions evaluates to $\True$.
Otherwise, the statechart would be non-deterministic.

\subsection{Applied-Too-Long Transitions}
\label{sec:sbm-gen-applied-too-long}

The formulas for the \ac{uca} type \textsc{applied-too-long} (\autoref{eq:applied-too-long}) may already be covered by the previously added transitions.
In order to check this, the outgoing transitions of $s_{ca}$ must be inspected.
If a transition is triggered for every context $c \neq cv$, the formula is already fulfilled.
Otherwise, we need to split $s_{ca}$ into two states by adding $s_{ca\_cv}$ to $S$.

The result of the translation can be seen in \autoref{f:applied-too-long}.
The original transitions to $s_{ca}$ are updated in the following way:
If previously $T(s, f_{s}, i) = (s_{ca}, \emptyset, \emptyset)$ with $f_{s}(cv) = 1$, we remove that transition and add $T(s, f'_{s}, i) = (s_{ca\_cv}, \emptyset, \emptyset)$ with $f'_s(x \land cv) = 1 \Leftrightarrow f_s(x) = 1$ and 
$T(s, f'_{s}, i) = (s_{ca}, \emptyset, \emptyset)$ with $f'_s(x \land \neg cv) = 1 \Leftrightarrow f_s(x) = 1$.
Hence, the transitions to $s_{ca}$ only trigger when $cv$ does not hold.
If $cv$ holds, the transition to $s_{ca\_cv}$ is triggered.
Additionally, we copy all outgoing transitions of $s_{ca}$ to $s_{ca\_cv}$ except the ones going to other duplicate states of $s_{ca}$ 
and add a transition between the two states that triggers when $cv$ holds:
$T(s_{ca}, f_{ca}, i) = (s_{ca\_cv}, \emptyset, \emptyset)$ if $f_{ca}(d)=1$, with $i\in I$, $d \in D$, and $f_{ca}(cv) = 1$.
Now, $s_{ca\_cv}$ represents that $cv$ holds and the control action $ca$ is sent.

In order to fulfill the formula, we must still assure that the control action is not sent anymore when the context changes.
Hence, we add another transition from  $s_{ca\_cv}$ to the initial state triggering when $cv$ does not hold:
$T(s_{ca\_cv}, f_{s}, i) = (s_0, \emptyset, \emptyset)$ if $f_{ca\_cv}(d)=1$, with  $i\in I$, $d \in D$, and $f_{s}(\neg cv) = 1$.
In order to avoid non-determinism the trigger may have to be adjusted such that it only evaluates to $\True$ when the trigger of the other transitions evaluate to $\False$.
This can be done by modifying $f_{s}$ such that it only evaluates to $1$ when $\neg cv$ holds and the triggers of all other transitions do not hold.

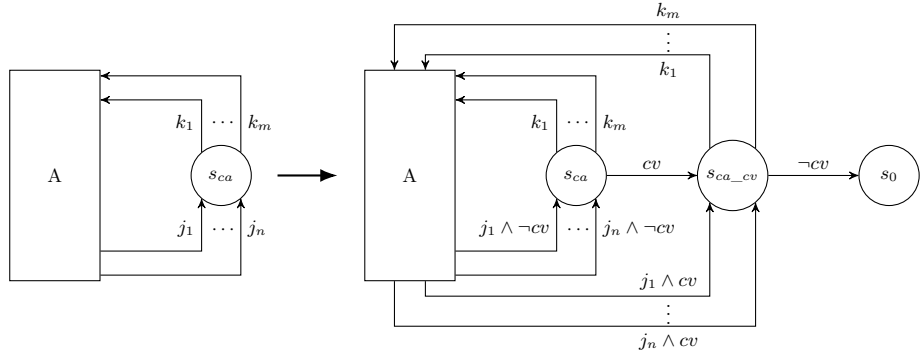
\begin{figure*}[htb]
	\centering
	\begin{tikzpicture}[>=stealth', scale=.8, transform shape]
		\node[draw, minimum width=1.5cm, minimum height=3.5cm] (A) {A};
		\node[circle, draw, minimum size=1cm, right=1.5cm of A] (C) {$s_{ca}$};
		
		\draw[->] ($(A.south east)+(0, .5)$) -| (C.230);
		\draw[->] ($(A.south east)+(0, .1)$) -| (C.310);
		\node[below = .5cm of C.230, left] (l1) {\(j_1\)};
		\node[below = .5cm of C.310, right] (l2) {\(j_n\)};
		\node at ($(l1)!.5!(l2)$) {\(\strut\cdots\)};
		
		\draw[->] (C.50) |- ($(A.north east)-(0, .1)$);
		\draw[->] (C.130) |- ($(A.north east)-(0, .5)$);
		\node[above = .5cm of C.130, left] (l3) {\(k_1\)};
		\node[above = .5cm of C.50, right] (l4) {\(k_m\)};
		\node at ($(l3)!.5!(l4)$) {\strut\(\cdots\)};
		\draw[thick, -Latex] (3.7, 0) -- +(1, 0);
		\begin{scope}[xshift=5.9cm]
			\node[draw, minimum width=1.5cm, minimum height=3.5cm] (A) {A};
			\node[circle, draw, minimum size=1cm, right=1.5cm of A] (C) {$s_{ca}$};
			\node[circle, draw, minimum size=1cm, right=1.5cm of C] (CC) {$s_{ca\_cv}$};
			\node[circle, draw, minimum size=1cm, right=1.5cm of CC] (s) {\(s_0\)};
			
			\draw[->] ($(A.south east)+(0, .5)$) -| (C.230);
			\draw[->] ($(A.south east)+(0, .1)$) -| (C.310);
			\node[below = .5cm of C.230, left] (l1) {\(j_1\land\lnot cv\)};
			\node[below = .5cm of C.310, right] (l2) {\(j_n\land\lnot cv\)};
			\node at ($(l1)!.5!(l2)$) {\(\strut\cdots\)};
			
			\draw[->] ($(A.south west)+(.5, 0)$) -- +(0, -.75) -| (CC.310);
			\draw[->] ($(A.south east)-(.5, 0)$) -- +(0, -.25) -| (CC.230);
			
			\node[yshift=-.25cm, right=4.3cm of A.south, above] (l5) {\(j_1\land cv\)};
			\node[yshift=-.75cm, right=4.3cm of A.south, below] (l6) {\(j_n\land cv\)};
			\node at ($(l5)!.5!(l6)$) {\(\strut\vdots\)};
			
			\draw[->] (C.50) |- ($(A.north east)-(0, .1)$);
			\draw[->] (C.130) |- ($(A.north east)-(0, .5)$);
			\node[above = .5cm of C.130, left] (l3) {\(k_1\)};
			\node[above = .5cm of C.50, right] (l4) {\(k_m\)};
			\node at ($(l3)!.5!(l4)$) {\strut\(\cdots\)};
			\draw[->] (C) -- node[above] {\(cv\)} (CC);
			\draw[->] (CC) -- node[above] {\(\lnot cv\)} (s);
			
			\draw[<-] ($(A.north west)+(.5, 0)$) -- +(0, .75) -| (CC.50);
			\draw[<-] ($(A.north east)-(.5, 0)$) -- +(0, .25) -| (CC.130);
			
			\node[yshift=.25cm, right=4.3cm of A.north, below] (l7) {\(k_1\)};
			\node[yshift=.75cm, right=4.3cm of A.north, above] (l8) {\(k_m\)};
			\node at ($(l7)!.5!(l8)$) {\(\strut\vdots\)};
		\end{scope}
	\end{tikzpicture}
	\caption{Translation of \acs{uca} type \textsc{applied-too-long}. $A$ is the statechart without the state $s_{ca}$.}
	\label{f:applied-too-long}
\end{figure*}

\subsection{Stopped-Too-Soon Transitions}
\label{sec:sbm-gen-stopped-too-soon}

\ac{ltl} formulas for \acp{uca} of type \textsc{stopped-too-soon} (\autoref{eq:stopped-too-soon}) are translated in a similar way (\autoref{f:stopped-too-soon}).
If in $s_{ca}$ no transition is triggered for $cv$, the formula is already fulfilled.
Otherwise, we need to split $s_{ca}$ and its transitions the same way as done for the translation of \textsc{applied-too-long}, given this is not already done.

The difference to the previous translation is that we do not add a transition to the initial state.
Instead, we modify the outgoing transitions of $s_{ca\_cv}$ in the following way:
If previously $T(s_{ca\_cv}, f_{s}, i) = (s', \emptyset, \emptyset)$ with $f_{s}(x) = 1$ and $x\in D$, we replace $f_s$ with $f'_s$, whereby $f'_s(x \land \neg cv) = 1 \Leftrightarrow f_s(x) = 1$,
meaning the outgoing transitions can only be triggered if $cv$ does not hold.
Another difference is that here we are also interested in the transitions from $s_{ca}$ to duplicates of this state.
After all necessary duplicate states are created, these transitions are also copied and modified for the newly created states.
With \emph{newly created} we mean that if a duplicate state was already created because of the translation of an \textsc{applied-too-long} formula,
this state does not get transitions to other duplicate states even if for the same context a \textsc{stopped-too-soon} formula exist.

\begin{figure*}[htb]
	\centering
	\begin{tikzpicture}[>=stealth', scale=.8, transform shape]
		\node[draw, minimum width=1.5cm, minimum height=3.5cm] (A) {A};
		\node[circle, draw, minimum size=1cm, right=1.5cm of A] (C) {$s_{ca}$};
		
		\draw[->] ($(A.south east)+(0, .5)$) -| (C.230);
		\draw[->] ($(A.south east)+(0, .1)$) -| (C.310);
		\node[below = .5cm of C.230, left] (l1) {\(j_1\)};
		\node[below = .5cm of C.310, right] (l2) {\(j_n\)};
		\node at ($(l1)!.5!(l2)$) {\(\strut\cdots\)};
		
		\draw[->] (C.50) |- ($(A.north east)-(0, .1)$);
		\draw[->] (C.130) |- ($(A.north east)-(0, .5)$);
		\node[above = .5cm of C.130, left] (l3) {\(k_1\)};
		\node[above = .5cm of C.50, right] (l4) {\(k_m\)};
		\node at ($(l3)!.5!(l4)$) {\strut\(\cdots\)};
		\draw[thick, -Latex] (3.7, 0) -- +(1, 0);
		\begin{scope}[xshift=5.9cm]
			\node[draw, minimum width=1.5cm, minimum height=3.5cm] (A) {A};
			\node[circle, draw, minimum size=1cm, right=1.5cm of A] (C) {$s_{ca}$};
			\node[circle, draw, minimum size=1cm, right=1.5cm of C] (CC) {$s_{ca\_cv}$};
			
			\draw[->] ($(A.south east)+(0, .5)$) -| (C.230);
			\draw[->] ($(A.south east)+(0, .1)$) -| (C.310);
			\node[below = .5cm of C.230, left] (l1) {\(j_1\land\lnot cv\)};
			\node[below = .5cm of C.310, right] (l2) {\(j_n\land\lnot cv\)};
			\node at ($(l1)!.5!(l2)$) {\(\strut\cdots\)};
			
			\draw[->] ($(A.south west)+(.5, 0)$) -- +(0, -.75) -| (CC.310);
			\draw[->] ($(A.south east)-(.5, 0)$) -- +(0, -.25) -| (CC.230);
			
			\node[yshift=-.25cm, right=4.3cm of A.south, above] (l5) {\(j_1\land cv\)};
			\node[yshift=-.75cm, right=4.3cm of A.south, below] (l6) {\(j_n\land cv\)};
			\node at ($(l5)!.5!(l6)$) {\(\strut\vdots\)};
			
			\draw[->] (C.50) |- ($(A.north east)-(0, .1)$);
			\draw[->] (C.130) |- ($(A.north east)-(0, .5)$);
			\node[above = .5cm of C.130, left] (l3) {\(k_1\)};
			\node[above = .5cm of C.50, right] (l4) {\(k_m\)};
			\node at ($(l3)!.5!(l4)$) {\strut\(\cdots\)};
			\draw[->] (C) -- node[above] {\(cv\)} (CC);
			
			\draw[<-] ($(A.north west)+(.5, 0)$) -- +(0, .75) -| (CC.50);
			\draw[<-] ($(A.north east)-(.5, 0)$) -- +(0, .25) -| (CC.130);
			
			\node[yshift=.25cm, right=4.3cm of A.north, below] (l7) {$k_1\land\lnot cv$};
			\node[yshift=.75cm, right=4.3cm of A.north, above] (l8) {$k_m\land\lnot cv$};
			\node at ($(l7)!.5!(l8)$) {\(\strut\vdots\)};
		\end{scope}
	\end{tikzpicture}
	\caption{Translation of \acs{uca} type \textsc{stopped-too-soon}. $A$ is the statechart without the state $s_{ca}$.}
	\label{f:stopped-too-soon}
\end{figure*}
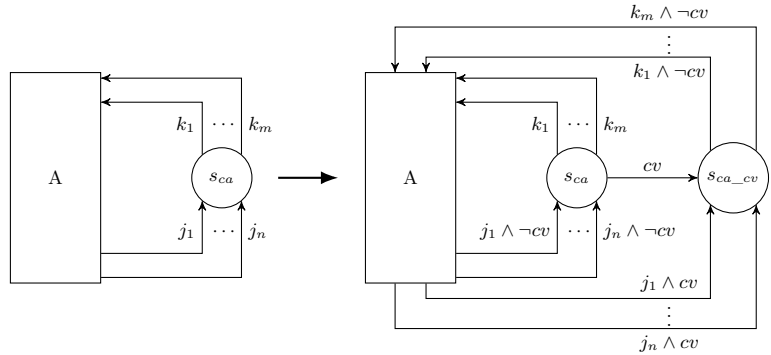

\subsection{Optimization}
\label{sec:sbm-gen-optimization}

After application of the proposed translation rules, the statechart can be further optimized.
When translating \acp{uca} of type \textsc{applied-too-long} or \textsc{stopped-too-soon} we modify the trigger for the incoming transitions to the original state.
This may lead to triggers that always evaluate to $\False$.
Transitions with such triggers can be deleted, which can lead to unreachable states that can be deleted as well.

\subsection{Construction Proof}
\label{sec:sbm-gen-proof}

We cannot translate \textsc{too-early} formulas into corresponding transitions, and thus cannot ensure that these formulas are fulfilled.
However, we can ensure other formulas are fulfilled, as stated in the following theorem.

\begin{theorem}
	For an \ac{sbm} created with the proposed construction rules, no trace exists that violates one of the \ac{ltl} formulas,
	except possibly the ones for the type \textsc{too-early}.
\end{theorem}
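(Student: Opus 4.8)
The plan is to prove the theorem by fixing a precise correspondence between the operational semantics of the constructed statechart and the atomic propositions of the \ac{ltl} formulas, and then showing, type by type, that the violating traces depicted in \autoref{f:formulas-summary} cannot be produced. First I would record the invariant that underlies the whole construction: by the entry-action discipline of \autoref{sec:sbm-gen-variables}, being in the state family $\{s_{ca}, s_{ca\_cv}, \dots\}$ of a control action $CA$ at a reaction is equivalent to $ca$ holding at that reaction, while $cv$ is decided by the variable valuation of the same reaction. I would also fix the reaction-level timing convention (the state recorded at reaction $i$ is the one reached after the transition triggered in reaction $i$, with an implicit self-loop whenever no guard fires), so that a transition guarded by $cv$ moves us into or out of the $ca$-family in the very reaction in which $cv$ holds. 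Since we assume the initial reaction only sets up the system, it then suffices to verify the $\G$-part of each non-\textsc{too-early} formula for all non-initial reactions, which is exactly what \autoref{f:formulas-summary} isolates.

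For each remaining type I would argue by contradiction, assuming a trace $t$ and a reaction $i \geq 1$ at which the violating pattern occurs and exhibiting the constructed transition that forbids it. For \textsc{provided} (\autoref{eq:provided}) a violation needs $cv \land ca$ at $i$; the exit transition $s_{ca} \to s_0$ on $cv$ prevents remaining in the $ca$-family while $cv$ holds, and no incoming transition is guarded by a context overlapping $cv$ (such a transition would require a \textsc{not-provided}/\textsc{too-late} \ac{uca} for $CA$ in a context that can co-hold with $cv$, contradicting the no-conflict assumption). For \textsc{not-provided} (\autoref{eq:not-provided}) and \textsc{too-late} (\autoref{eq:too-late}), the added transitions into $s_{ca}$ from every other state on $cv$ force $ca$ already in the first reaction where $cv$ holds, so that both $ca \R cv \land \F ca$ and $\X(cv \to ca)$ hold; this is why a single construction discharges both types. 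For \textsc{applied-too-long} (\autoref{eq:applied-too-long}) and \textsc{stopped-too-soon} (\autoref{eq:stopped-too-soon}) I would use the split state $s_{ca\_cv}$: the splitting of incoming edges guarantees that whenever $cv \land ca$ holds we are in $s_{ca\_cv}$, and then the transition $s_{ca\_cv} \to s_0$ on $\lnot cv$ (for \textsc{applied-too-long}), respectively the $\lnot cv$-guarding of every outgoing transition of $s_{ca\_cv}$ (for \textsc{stopped-too-soon}), rules out the patterns of \autoref{f:traces-applied-too-long} and \autoref{f:traces-stopped-too-soon}.

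The main obstacle, and the part requiring the most care, is that several \acp{uca} may touch the same control action, so the transitions and guards verified in isolation above are afterwards refined for determinism and pruned during optimization; I must show these edits preserve every individual guarantee simultaneously. I would package this as a set of structural invariants of the \emph{final} statechart — essentially one implication per clause above, such as ``from $s_{ca\_cv}$, if $cv$ holds then the next state again lies in the $ca$-family'' — and show each invariant survives every construction step, every determinism refinement (which only strengthens a guard by conjoining $\lnot cv$ or the negations of competing guards, never suppressing a firing the invariant relies on, again by no-conflict), and the optimization (which only removes transitions with unsatisfiable guards and the states they render unreachable). The delicate checks are that guard refinement never leaves a ``gap'' in which the needed transition fails to fire while the implicit self-loop lands us in a wrong state, and that a state split driven by one type does not break the $ca$-family correspondence used by another; both reduce to the assumption that no two \acp{uca} impose contradictory requirements on the same $(CA, cv)$. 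Finally I would note that \textsc{too-early} is genuinely unattainable, since its guard constrains the current output by the \emph{next} reaction's context, which is precisely why it is the only excluded type, completing the proof.
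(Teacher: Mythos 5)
Your proposal is correct and follows essentially the same route as the paper's own proof: a per-type case analysis (excluding \textsc{too-early}) that assumes a violating trace and exhibits the constructed transition --- the $s_{ca}\to s_0$ exit on $cv$ for \textsc{provided}, the all-states-to-$s_{ca}$ transitions on $cv$ discharging both \textsc{not-provided} and \textsc{too-late}, and the split state $s_{ca\_cv}$ for \textsc{applied-too-long}/\textsc{stopped-too-soon} --- contradicting the assumed output, with conflicts ruled out by the no-contradicting-\acs{uca} assumption. If anything, your explicit check that the per-type guarantees survive the determinism refinements and the optimization step is more careful than the paper, which handles those interactions only implicitly.
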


\begin{proof}
	Let $\varphi = \varphi_0 \land \dots \land \varphi_i$ be the \ac{ltl} formula the \ac{sbm} should fulfill, where $\varphi_0, \dots, \varphi_i$ are the formulas used to create the \ac{sbm}. 
	Let $t = t_0, t_1, \dots$ be a trace that does not fulfill $\varphi$, which means at least one subformula is not fulfilled. 
	Then there exists a $\varphi_j$ with $0\leq j \leq i, j \in \mathbb{N}$ that is not fulfilled by $t$.
	We show that the generated statechart cannot produce a trace $r = r_0, r_1, \dots$, which violates $\varphi_j$.
	$\varphi_j$ can have five possible forms, one for each \ac{uca} type except \textsc{too-early}.
	We assume the statechart can produce a trace $r$ with $r=t$.
	
	\begin{enumerate}[label=Case \arabic*:,ref=\arabic*,leftmargin=*]
		\item  \textsc{provided} (\autoref{eq:provided}) 
		
		$\varphi_j$ is of form $cv \to \neg ca$.		
		This means $t_x, t_{x+1}$ exist with $output(t_{x+1})=ca$ and $cv$ holds in reaction $x+1$.
		
		If $output(t_x)=ca$, according to the construction rules a transition from $s_{ca}$ to $s_0$ exists that triggers when $cv$ holds.
		Hence, $state(r_{x+1}) = s_0$ and $output(r_{x+1}) = \emptyset \neq \{ca\} =output(t_{x+1})$ and therefore $r\neq t$.
		
		If $output(t_x) \neq ca$, a transition from $state(r_x)$ to $state(r_{x+1})$ must exist that is triggered when $cv$ holds, and $output(r_{x+1})=ca$.
		Such a transition only exists if there exists $\varphi_k$ with $0\leq k \leq i, k \in \mathbb{N}$ of the form $cv \to ca$, which would mean $\varphi$ is not satisfiable and hence is forbidden.
		
		\item \label{case2} \textsc{not-provided} (\autoref{eq:not-provided}) 
		
		$\varphi_j$ is of form $\G((\neg cv \land \X cv) \to \X ((ca \R cv) \land \F ca))$.
		\begin{enumerate}[label=Case \ref{case2}\alph*),leftmargin=2em]
			\item A sequence $t_x, \dots, t_y$ exists with $output(t_{x+1}), \dots, output(t_{y-1}) \neq ca$, $cv$ does not hold in reaction $x$, $cv$ holds in reactions $x+1$ till $y-1$ and $cv$ does not hold in reaction $y$.
			Then either $state(r_x)= s_{ca}$ must hold or according to the construction rule a transition from $state(r_x)$ to $s_{ca}$ must exist that triggers when $cv$ holds.
			
			In the latter case, since $cv$ holds in reaction $x+1$, $state(r_{x+1}) = s_{ca}$ must hold. 
			In the first case, according to construction rules we only leave the state $ca$ if an \ac{ltl} formula exists, which states that providing $ca$ is forbidden or which states that providing another control action is necessary. 
			Both occurrences are a contradiction.
			Thus, $state(r_{x+1}) = s_{ca}$.
			
			Hence, in both cases $output(r_{x+1}) = ca \neq output(t_{x+1})$ holds.
			\item A reaction $t_x$ exists with $output(t_{z}) \neq ca$ for all $z > x$, $cv$ does not hold in reaction $x$, and $cv$ holds for every reaction after $x$.
			In this case, the same argument applies as for the other case.
			Either we are already in $s_{ca}$ and do not leave it because $cv$ holds,
			or a transition from $state(r_x)$ to $s_{ca}$ must exist that triggers when $cv$ holds.
			Hence, since $cv$ holds in reaction $x+1$, it must hold $state(r_{x+1}) = s_{ca}$. 
			Thus, $output(r_{x+1}) = ca \neq output(t_{x+1})$.
		\end{enumerate}
		\item \textsc{too-late} (\autoref{eq:too-late})
		
			$\varphi_j$ is of form $cv \to ca \land \G(\neg cv \to \X(cv \to ca))$.
			This means $t_x, t_{x+1}$ exist with $output(t_{x+1})\neq ca$,  $cv$ does not hold in reaction $x$, and $cv$ holds in reaction $x+1$.
			According to the construction rule a transition from all states to $s_{ca}$ exists that triggers when $cv$ holds.
			Since we defined the triggers of transitions uniquely to avoid non-determinism, no other transition exists that could fire.
			Hence, $state(r_{x+1}) = s_{ca}$ and thus $output(r_{x+1})=ca\neq output(t_{x+1})$.
		\item \textsc{applied-too-long} (\autoref{eq:applied-too-long})
		
			$\varphi_j$ is of form $\G((cv \land ca) \to \X (\neg cv \to \neg ca))$.
			This means a sequence $t_x, t_{x+1}$ exist with $output(t_x)=ca=output(t_{x+1})$, $cv$ holds in reaction $x$, and $cv$ does not hold in reaction $x+1$.
			According to the construction rule and since $output(t_x)=ca$, it must hold $state(r_x)=s_{ca\_cv}$.
			Since $cv$ does not hold in reaction $x+1$, the transition to $s_0$ or another state that does not send $ca$ is triggered.
			This leads to $output(r_{x+1}) \neq ca = output(t_{x+1})$.
		\item \textsc{stopped-too-soon} (\autoref{eq:stopped-too-soon})
		
			$\varphi_j$ is of form $\G ((cv \land ca) \to \X (\neg ca \to \neg cv))$.
			This means $t_x, t_{x+1}$ exist with $output(t_x)=ca$, $output(t_{x+1}) \neq ca$, and $cv$ holds in reaction $x$ and $x+1$.
			According to the construction rule and since $output(t_x)=ca$, it must hold $state(r_x)=s_{ca\_cv}$.
			Since $output(t_{x+1}) \neq ca$, an outgoing transition must have been triggered and thus $state(t_{x+1})\neq s_{ca\_cv}$.
			According to the construction, an outgoing transition can only be triggered if $cv$ does not hold, which is a contradiction to the statement that $cv$ holds in reaction $x+1$.
	\end{enumerate}
\end{proof}

\section{\aclp{dca}}
\label{sec:dca}

The behavior model that is generated with the translation presented in the last section fulfills the \ac{ltl} formulas and hence is safe.
However, a system should not only be safe but also should fulfill its system goals, 
which is typically not implied by the safety properties alone and thus not part of the proposed model generation.
However, it turns out that we can apply the machinery presented so far to achieve that aim as well.

To address the issue, we propose to extend \ac{stpa} with \emph{\acfp{dca}}.
A \ac{dca} determines in which context a control action should (not) be sent to fulfill the system goal.
In contrast to \acp{uca}, \acp{dca} have only two types: \textsc{provided} and \textsc{not-provided}.
The analysts applying \ac{stpa} already have to look at each context to determine whether \acp{uca} exist.
During that process, they can declare \acp{dca} for contexts where (not) providing a control action is desired to achieve the system goal.

These \acp{dca} can be translated to \ac{ltl} formulas just as done for the \acp{uca}.
The difference is that for \acp{dca} of type \textsc{provided} \autoref{eq:not-provided} must be used and for \acp{dca} of type \textsc{not-provided} \autoref{eq:provided}.
For a \ac{uca} of type \textsc{not-provided} the formula must ensure that the control action is provided, and for a \ac{uca} of type \textsc{provided} the formula must ensure that the control action is not provided.
Conversely, for a \ac{dca} of type \textsc{not-provided} the formula must ensure that the control action is not provided, and for a \ac{dca} of type \textsc{provided} the formula must ensure that the control action is provided.

When using these \ac{ltl} formulas that represent \acp{dca} together with the ones that prevent \acp{uca} to generate the \ac{sbm},
the resulting model will not only be safe but additionally fulfills the specified system goals.

\section{\acs{acc} Example}
\label{sec:impl}
We implemented the proposed \ac{sbm} generation in the open source tool \ac{pasta}~\cite{PetzoldKH23} to demonstrate the approach.
As an example we analyzed 
an \ac{acc} with stop and go functionality~\cite{VenhovensNA00}.
In \ac{pasta} the analyst can state abstract values for process model variables, which can be used to define \acp{uca}.
In order to infer the process model variable types, the user can define the value ranges for each abstract value.
The keywords \emph{true} and \emph{false} can be used to indicate boolean variables.
\emph{MIN} and \emph{MAX} can be used 
to state that there is no lower or upper bound respectively.
In value ranges that contain two values, `[' and `]' are used to indicate that the range should include the first or last value respectively, while `(' and `)' indicate that the value should be excluded.

Consider for example the variable \emph{speed} with the values \emph{desiredSpeed}, \emph{less\-Than\-Desired\-Speed}, and \emph{greaterThanDesiredSpeed} for the \ac{acc} as used by Abdulkhaleq \etal~\cite{AbdulkhaleqW16}.
The value ranges for these values are the following: 
\begin{align*}
	desiredSpeed &= [desiredSpeed] \\
	lessThanDesiredSpeed &= [MIN, desiredSpeed) \\
	greaterThanDesiredSpeed &= (desiredSpeed, MAX]
\end{align*}

If no value ranges are defined, we create an enum type that contains the values.
This way the context in an \ac{uca} can be translated to variables used in the behavior model.
For example a \ac{uca} with context \emph{speed=lessDesiredSpeed} is translated to \emph{speed < desiredSpeed} when generating the \ac{ltl} formula.

The \acp{dca} can be defined the same way as the \acp{uca}.
Rules can be defined for control actions and \ac{dca} types.
In the \emph{contexts} field the \acp{dca} are defined stating the context in which the control action should (not) be provided.

The user can select a controller for which the model should be generated, which triggers the generation of the \ac{sbm} as a \ac{scchart}.
An \ac{scchart} can contain several states, transitions with priorities, and input, output and internal variables \cite{vonHanxledenDM+13b}.
The resulting \ac{sbm} contains a state for each control action and an initial state as described in \autoref{sec:sbmGeneration}.
The process model variables are translated to internal variables, where the type is inferred from the value ranges.
Additionally, an enum is created that contains the control actions as values and the \texttt{controlAction} variable has the type of that enum.
The input variables are the ones that are used in the value ranges.

The \acp{uca} and \acp{dca} are translated to \ac{ltl} formulas following the proposed translation rules.
$cv$ is created by connecting the elements in the \emph{context} field of the \ac{uca}/\ac{dca} with the $\land$ operator.
For $ca$ the control action stated in the \ac{uca}/\ac{dca} is translated to the corresponding enum value $caEnum$, and 
we define $ca \Leftrightarrow controlAction = caEnum$.
The formulas are added as \emph{LTL Annotations} to the \ac{scchart}
and translated as described in \autoref{sec:sbmGeneration}.
In each created state an \emph{entry} action is defined to set \emph{controlAction} to the value represented by the state.

For the \ac{acc} example the resulting \ac{sbm} is shown in \autoref{f:acc-sbm}.
It contains four states: 
the initial one in which no control action is sent, a state in which the vehicle stops, 
a state in which the vehicle accelerates, and one in which the vehicle decelerates.
The names of the accelerating and decelerating state differ from the other two because they were generated based on a \ac{uca} with type \textsc{applied-too-long} and \textsc{stopped-too-soon} respectively.
The original states, which were just labeled \texttt{acc} and \texttt{dec} were deleted since they were not reachable.
The \texttt{stpa} file and the resulting textual \ac{scchart} are shown in the \nameref{ap:acc-example}.

\begin{figure*}[tb]
	\centering
	\includegraphics[scale=0.5]{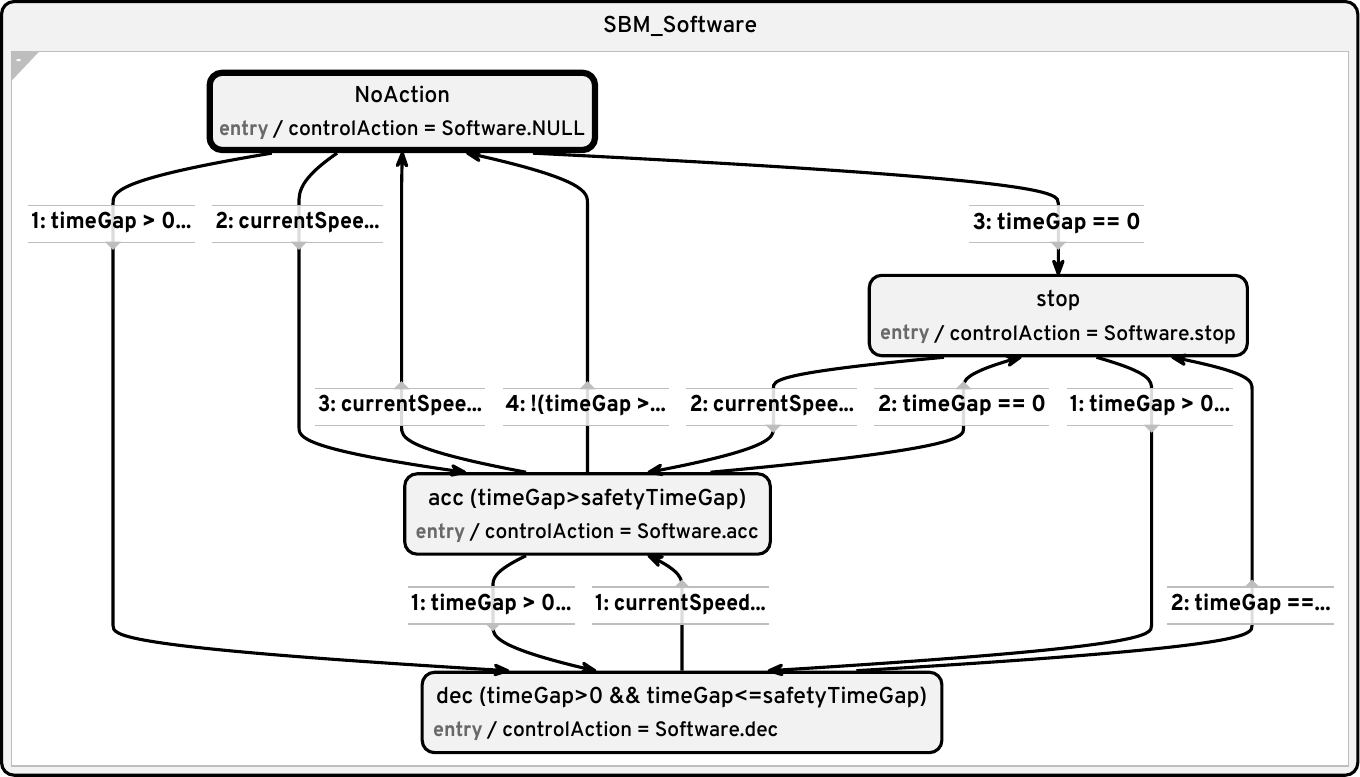}
	\caption{Automatically generated \ac{sbm} for the \ac{acc} example.}
	\label{f:acc-sbm}
\end{figure*}

\section{Discussion}
\label{sec:eva}

The resulting \ac{sbm} for the \ac{acc} example fulfills all generated \ac{ltl} formulas.
Hence, safety properties as well as liveness properties are fulfilled.
However, the proposed synthesis is not necessarily complete.
Calculations of internal values, \eg \texttt{timeGap}, cannot be automatically inferred.
The user has to modify the resulting \ac{sbm} by adding another region and stating the calculation.
This way the calculations are performed concurrently to the behavior of the system.

The same applies to the initialization of values.
Variables that are of type \emph{int} get 0 as initial value, but in the \ac{acc} example the variable \texttt{currentSpeed} should be set to some initial speed of the vehicle.
Additionally, \texttt{currentSpeed} must be updated in each reaction.
For that the user must state how the speed changes based on the sent control action or an additional input is needed, whose value is assigned to \texttt{currentSpeed}.
A complete \ac{acc} \ac{sbm} is shown in the \nameref{ap:acc-example}.

It would be possible to modify \ac{stpa} such that this missing information can be extracted automatically.
The analyst could state the effect of a control action in its definition, and the definition of the process model variables can be extended to also include initial values and calculations.
However, we think that this is out of scope for the safety analyst and should be done by a system developer.
The needed information may not be known by the analyst and does not influence the analysis.
Additionally, it would not save time to just outsource the initialization and calculation of variables to the analyst.

When the missing information is added to the \ac{sbm}, the model depicts the desired and safe behavior since the behavior is not changed by the added information.
The \ac{sbm} created manually by Abdulkhaleq \etal for the \ac{acc} is very similar to our generated one.
It contains the same four states but these are further encapsulated by a superstate.
This superstate is connected to another state that models whether the \ac{acc} is off or on.
Since the controller in the analysis has no variable for stating the status of the \ac{acc}, this is not considered in our \ac{sbm} generation.
Even if it is considered we would not generate a superstate for the four already generated states.
In future work we will work on integrating an option to define which superstates are required such that the internal behavior can be generated automatically.

In conclusion, the generated \ac{sbm} is not fully complete and can be further improved.
However, it already gives the system developers a solid foundation for an \ac{sbm}, and since risk analysis has to be done either way, the synthesis saves time.
Generating the safety properties directly based on a safety analysis could also reduce failures or missing formulas.

\section{Conclusion and Future Work}
\label{sec:conclusion}

We presented a synthesis of an \ac{sbm} from \ac{stpa}.
The first part of the synthesis translates the \acp{uca} from \ac{stpa} to \ac{ltl} formulas.
In the second part an \ac{sbm} is generated by using each control action defined in the analysis as a state and translating the \ac{ltl} formulas to corresponding transitions.
We extended \ac{stpa} by \acp{dca} to model the desired behavior as well.
They are stated just like the \acp{uca} and hence can be translated to \ac{ltl} formulas and to transitions in the same way.

The synthesis is implemented in \ac{pasta} and creates an \ac{scchart} as the \ac{sbm}.
As an example an \ac{acc} was used, which resulted in a similar \ac{sbm} as created manually by Abdulkhaleq \etal
The resulting \ac{sbm} is not complete because initialization and calculation of variables must be added manually.
However, it provides a good foundation and covers safety as well as liveness properties.

In the future we want to allow the definition of explicit state variables in the process model of a controller.
For each value of this variable a superstate can be created.
The behavior inside each superstate can be generated by 
using the presented approach with the corresponding \acp{uca}.
An open question in this approach is how to automatically generate the transitions between the superstates.
One approach could be to extend \ac{stpa} even more to define the transitions between the states, but this again may be out of scope for the safety analyst.

\bibliographystyle{splncs04}
\bibliography{cav24}

\begin{thebibliography}{10}
\providecommand{\url}[1]{\texttt{#1}}
\providecommand{\urlprefix}{URL }
\providecommand{\doi}[1]{https://doi.org/#1}

\bibitem{AbdulkhaleqW16}
Abdulkhaleq, A., Wagner, S.: {A Systematic and Semi-Automatic Safety-Based Test
  Case Generation Approach Based on Systems-Theoretic Process Analysis}. arXiv
  preprint arXiv:1612.03103  (2016)

\bibitem{BabiakKRS12}
Babiak, T., K{\v{r}}et{\'\i}nsk{\`y}, M., {\v{R}}eh{\'a}k, V., Strej{\v{c}}ek,
  J.: {LTL to B{\"u}chi Automata Translation: Fast and More Deterministic}. In:
  International Conference on Tools and Algorithms for the Construction and
  Analysis of Systems. pp. 95--109. Springer (2012)

\bibitem{ChengK93}
Cheng, K.T., Krishnakumar, A.S.: Automatic functional test generation using the
  extended finite state machine model. In: Proceedings of the 30th
  International Design Automation Conference. pp. 86--91 (1993)

\bibitem{CimattiCGR99}
Cimatti, A., Clarke, E., Giunchiglia, F., Roveri, M.: {NuSMV: A new Symbolic
  Model Verifier}. In: Computer Aided Verification: 11th International
  Conference, CAV’99 Trento, Italy, July 6--10, 1999 Proceedings 11. pp.
  495--499. Springer (1999)

\bibitem{Clarke97}
Clarke, E.M.: Model checking. In: {Foundations of Software Technology and
  Theoretical Computer Science: 17th Conference Kharagpur, India, December
  18--20, 1997 Proceedings 17}. pp. 54--56. Springer (1997)

\bibitem{DamasLL06}
Damas, C., Lambeau, B., Van~Lamsweerde, A.: {Scenarios, Goals, and State
  Machines: a Win-Win Partnership for Model Synthesis}. In: Proceedings of the
  14th ACM SIGSOFT international symposium on Foundations of software
  engineering. pp. 197--207 (2006)

\bibitem{DanieleGV99}
Daniele, M., Giunchiglia, F., Vardi, M.Y.: {Improved Automata Generation for
  Linear Temporal Logic}. In: Computer Aided Verification: 11th International
  Conference, CAV’99 Trento, Italy, July 6--10, 1999 Proceedings 11. pp.
  249--260. Springer (1999)

\bibitem{DippolitoBPU10}
D'Ippolito, N.R., Braberman, V., Piterman, N., Uchitel, S.: {Synthesis of Live
  Behaviour Models}. In: Proceedings of the eighteenth ACM SIGSOFT
  international symposium on Foundations of software engineering. pp. 77--86
  (2010)

\bibitem{GastinO01}
Gastin, P., Oddoux, D.: {Fast LTL to B{\"u}chi Automata Translation}. In:
  Computer Aided Verification: 13th International Conference, CAV 2001 Paris,
  France, July 18--22, 2001 Proceedings 13. pp. 53--65. Springer (2001)

\bibitem{GerthPVW95}
Gerth, R., Peled, D., Vardi, M.Y., Wolper, P.: {Simple On-the-fly Automatic
  Verification of Linear Temporal Logic}. In: International Conference on
  Protocol Specification, Testing and Verification. pp. 3--18. Springer (1995)

\bibitem{GiannakopoulouL02}
Giannakopoulou, D., Lerda, F.: {From states to transitions: Improving
  translation of LTL formulae to B{\"u}chi Automata}. In: International
  Conference on Formal Techniques for Networked and Distributed Systems. pp.
  308--326. Springer (2002)

\bibitem{GiannakopoulouM03}
Giannakopoulou, D., Magee, J.: {Fluent Model Checking for Event-based Systems}.
  In: Proceedings of the 9th European software engineering conference held
  jointly with 11th ACM SIGSOFT international symposium on Foundations of
  software engineering. pp. 257--266 (2003)

\bibitem{vonHanxledenDM+13b}
von Hanxleden, R., Duderstadt, B., Motika, C., Smyth, S., Mendler, M., Aguado,
  J., Mercer, S., O'Brien, O.: {SCCharts: Sequentially Constructive
  Statecharts} for safety-critical applications. Technical Report~1311,
  Christian-Albrechts-Universit{\"a}t zu Kiel, Department of Computer Science
  (Dec 2013), {ISSN 2192-6247}

\bibitem{vonHanxledenLF+22}
von Hanxleden, R., Lee, E.A., Fuhrmann, H., Schulz{-}Rosengarten, A.,
  Domr{\"{o}}s, S., Lohstroh, M., Bateni, S., Menard, C.: Pragmatics twelve
  years later: A report on {Lingua Franca}. In: 11th International Symposium on
  Leveraging Applications of Formal Methods, Verification and Validation
  (ISoLA). Lecture Notes in Computer Science, vol. 13702, pp. 60--89. Springer,
  Rhodes, Greece (Oct 2022). \doi{10.1007/978-3-031-19756-7_5}

\bibitem{Harel87}
Harel, D.: Statecharts: {A} visual formalism for complex systems. Science of
  Computer Programming  \textbf{8}(3),  231--274 (Jun 1987)

\bibitem{KruegerGSB98m}
Kr{\"u}ger, I., Grosu, R., Scholz, P., Broy, M.: {From MSCs to Statecharts}.
  In: IFIP Working Conference on Distributed and Parallel Embedded Systems. pp.
  61--71. Springer (1998)

\bibitem{LeeS17}
Lee, E.A., Seshia, S.A.: Introduction to Embedded Systems, A Cyber-Physical
  Systems Approach, Second Edition. MIT Press (2017),
  \url{http://LeeSeshia.org}

\bibitem{LevesonT18}
Leveson, N., Thomas, J.P.: {STPA} {Handbook}. MIT Partnership for Systems
  Approaches to Safety and Security (PSASS)  (2018),
  \url{http://psas.scripts.mit.edu/home/get_file.php?name=STPA_handbook.pdf}

\bibitem{Leveson16}
Leveson, N.G.: Engineering a {Safer} {World}: {Systems} {Thinking} {Applied} to
  {Safety}. The MIT Press (2016)

\bibitem{PetzoldKH23}
Petzold, J., Krei{\ss}, J., von Hanxleden, R.: {PASTA: Pragmatic Automated
  System-Theoretic Process Analysis}. In: 53rd Annual {IEEE/IFIP} International
  Conference on Dependable Systems and Network, {DSN} 2023, Porto, Portugal,
  June 27-30, 2023. pp. 559--567. {IEEE} (2023).
  \doi{10.1109/DSN58367.2023.00058}

\bibitem{SomenziB00}
Somenzi, F., Bloem, R.: {Efficient B{\"u}chi Automata from LTL Formulae}. In:
  Computer Aided Verification: 12th International Conference, CAV 2000,
  Chicago, IL, USA, July 15-19, 2000. Proceedings 12. pp. 248--263. Springer
  (2000)

\bibitem{Thomas13}
Thomas, J.P.: Extending and {A}utomating a {S}ystems-{T}heoretic {H}azard
  {A}nalysis for {R}equirements {G}eneration and {A}nalysis. Ph.D. thesis,
  Massachusetts Institute of Technology (2013)

\bibitem{UchitelBC08}
Uchitel, S., Brunet, G., Chechik, M.: {Synthesis of Partial Behavior Models
  from Properties and Scenarios}. IEEE Transactions on Software Engineering
  \textbf{35}(3),  384--406 (2008)

\bibitem{UchitelKM01}
Uchitel, S., Kramer, J., Magee, J.: {Detecting Implied Scenarios in Message
  Sequence Chart Specifications}. ACM SIGSOFT Software Engineering Notes
  \textbf{26}(5),  74--82 (2001)

\bibitem{VenhovensNA00}
Venhovens, P., Naab, K., Adiprasito, B.: {Stop and Go Cruise Control}. Tech.
  rep., SAE Technical Paper (2000)

\end{thebibliography}

\newpage

\appendix

\section*{Appendix}
\label{ap:acc-example}

The analysis we have done for the example \ac{acc} system is shown in \autoref{code:stpa-acc}.
It is an \texttt{stpa} file created in \ac{pasta}.
Executing the generation of the \ac{sbm} for this file results in the \ac{scchart} file shown in \autoref{code:scchart-acc}.
The textual \ac{scchart} is annotated with the \ac{ltl} formula we generated and used for the synthesis of the model.
Generally, \acp{scchart} are realized with a text-first approach \cite{vonHanxledenLF+22}.
This means, the \ac{scchart} is defined textually in an editor and the corresponding visualization is generated automatically.

\lstinputlisting[language=stpa, caption=\ac{acc} example analysis in \ac{pasta}., label=code:stpa-acc]{Acc-jep2.stpa}

\lstinputlisting[language=sctx, caption=Automatically generated textual \ac{scchart} based on the \ac{acc} analysis. The \ac{ltl} formulas are here highlighted as done in \autoref{f:formulas-summary} with the implicant in green and the implicand in orange., label=code:scchart-acc]{Acc-5_jep2.sctx}

\autoref{f:acc-sbm-whole-labels} shows the visualization of the generated textual \ac{scchart}.
As mentioned in \autoref{sec:eva} this model is not complete yet.
We have to define the calculation of \texttt{timeGap} and the effect of each control action manually.
\autoref{f:acc-sbm-complete} shows an \ac{scchart} where we have added these missing information.
We declared \texttt{during} actions in each state to model the effect of the corresponding control action
and defined an additional region in which \texttt{timeGap} is calculated.

\begin{figure*}[ptb]
	\centering
	\includegraphics[scale=0.483]{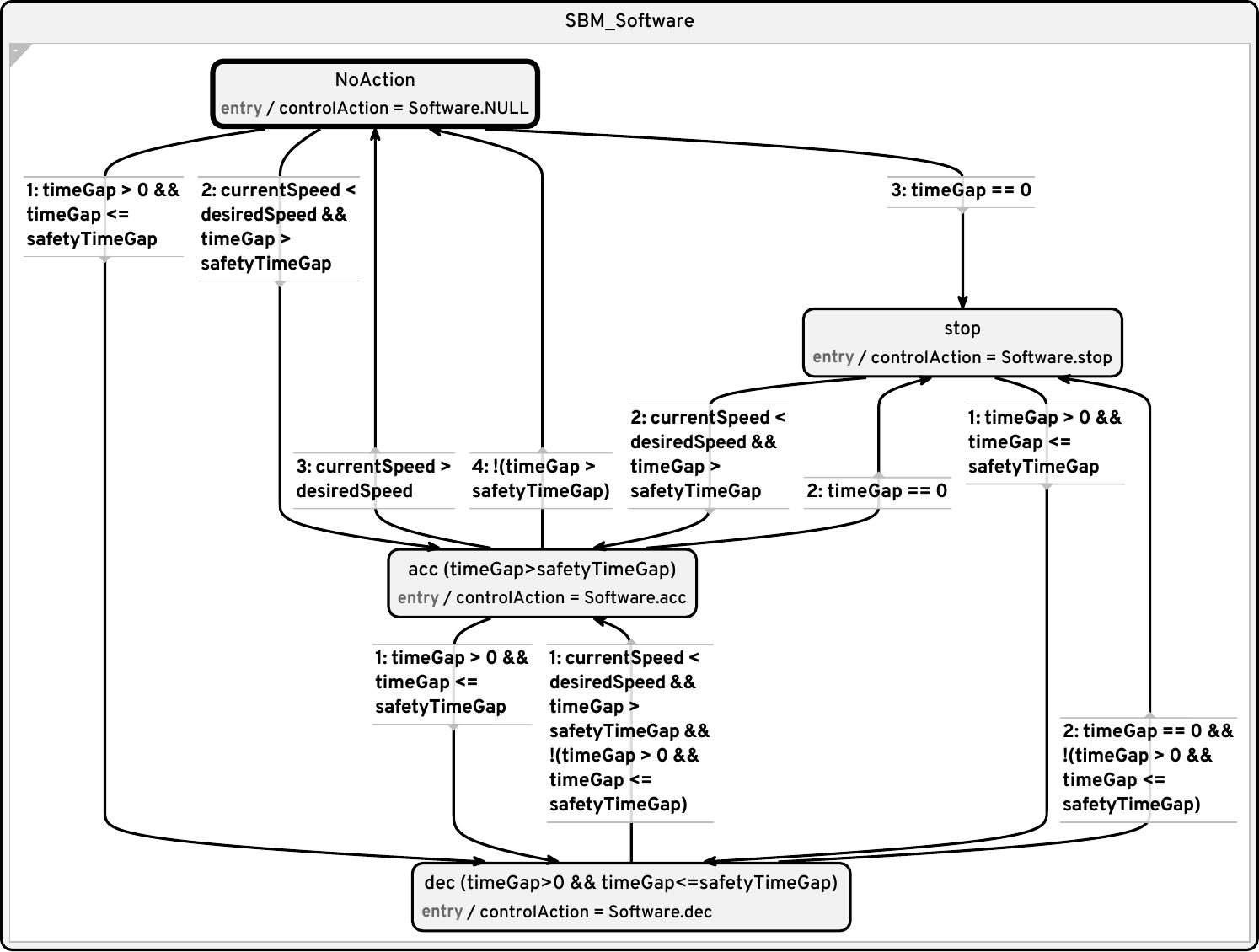}
	\caption{The \ac{acc} \ac{sbm} as an \ac{scchart} with complete transition labels.}
	\label{f:acc-sbm-whole-labels}
\end{figure*}

\begin{figure*}[ptb]
	\centering
	\includegraphics[scale=0.49]{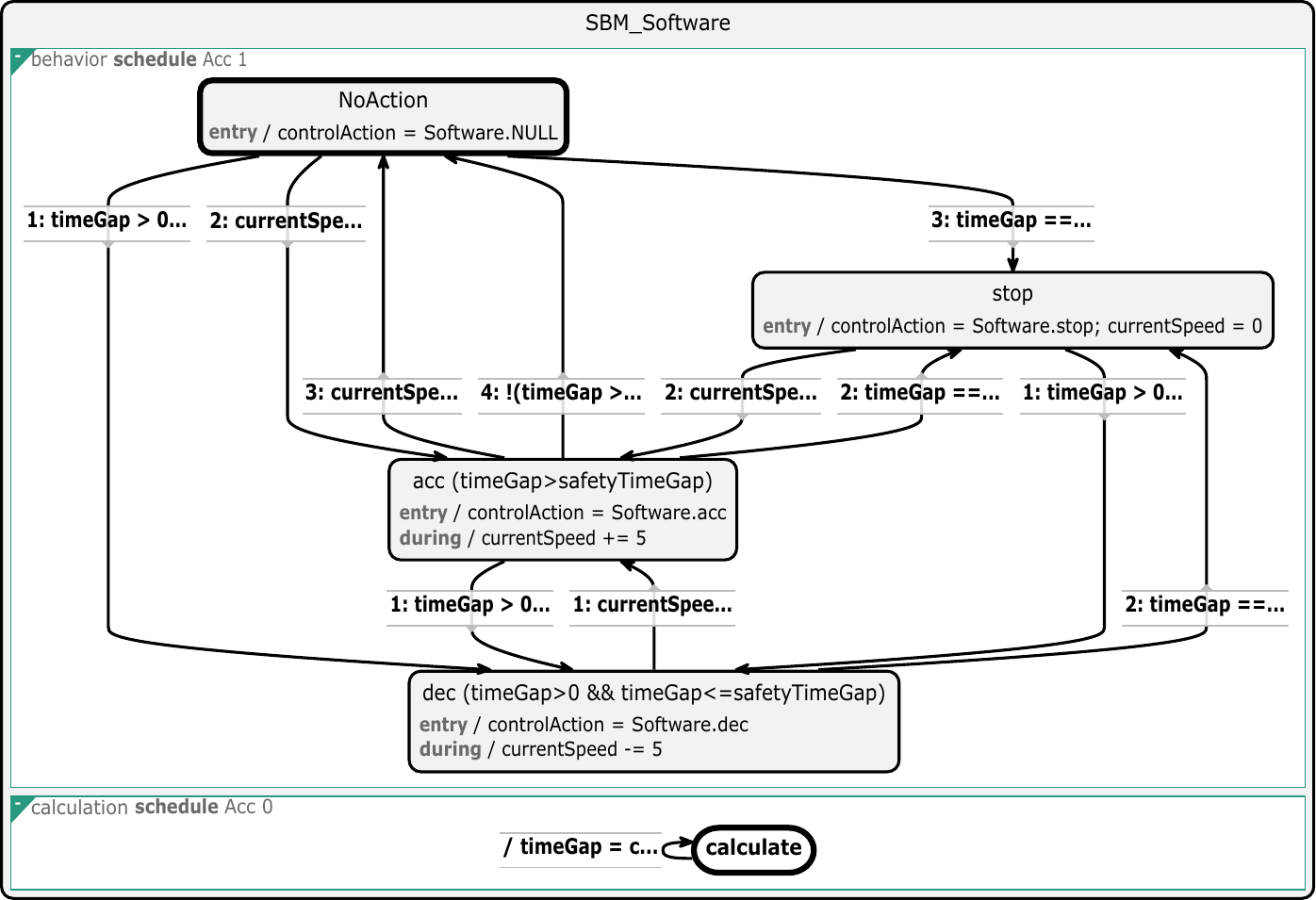}
	\caption{The \ac{acc} \ac{sbm} containing the missing information.}
	\label{f:acc-sbm-complete}
\end{figure*}

\end{document}